\newtheorem{definition}{Definition}
\newtheorem{theorem}{Theorem}
\newtheorem{lemma}{Lemma}
\begin{document}

\title{Faster Exact and Parameterized Algorithm for Feedback Vertex Set in Tournaments}
 \author{Mithilesh Kumar\thanks{Department of Informatics, University of Bergen, Norway \texttt{\{mithilesh.kumar|daniello\}@ii.uib.no}} 
\addtocounter{footnote}{-1}
 \and Daniel Lokshtanov
}

\maketitle


\begin{abstract}
A {\em tournament} is a directed graph $T$ such that every pair of vertices are connected by an arc. A {\em feedback vertex set} is a set $S$ of vertices in $T$ such that $T - S$ is acyclic. In this article we consider the {\sc Feedback Vertex Set} problem in tournaments. Here input is a tournament $T$ and integer $k$, and the task is to determine whether $T$ has a feedback vertex set of size at most $k$. We give a new algorithm for {\sc Feedback Vertex Set in Tournaments}. The running time of our algorithm is upper bounded by $O(1.618^k + n^{O(1)})$ and by $O(1.46^n)$. Thus our algorithm simultaneously improves over the fastest known parameterized algorithm for the problem by Dom et al.  running in time $O(2^kk^{O(1)} + n^{O(1)})$, and the fastest known exact exponential time algorithm by Gaspers and Mnich with running time $O(1.674^n)$. 
On the way to prove our main result we prove a new partitioning theorem for undirected graphs. In particular we show that the vertices of any undirected $m$-edge graph of maximum degree $d$ can be colored white or black in such a way that for each of the two colors, the number of edges with both endpoints of that color is between $m/4-d/2$ and $m/4$+$d/2$.

\end{abstract}
\section{Introduction}
A {\em feedback vertex set} in a graph $G$ is a vertx set $S$ such that $G - S$ is acyclic. For undirected graphs this means that $G - S$ is a forest, while for directed graphs this implies that $G - S$ is a directed acyclic graph (DAG). 
In the {\sc Feedback Vertex Set} (FVS) problem we are given as input an {\em undirected} graph $G$ and integer $k$, and asked whether there exists a feedback vertex set of size at most $k$. The corresponding problem for directed graphs is called {\sc Directed Feedback Vertex Set} (DFVS). Both problems are NP-complete~\cite{GJ79} and have been extensively studied from the perspective of approximation algorithms~\cite{BafnaBF99,EvenNSS98}, parameterized algorithms~\cite{ChenLLOR08,CyganNPPRW11,KociumakaP14}, exact exponential time algrithms~\cite{Razgon07,XiaoN15} as well as graph theory~\cite{erdHos1965independent,reed1996packing}.

In this paper we consider a restriction of DFVS, namely the {\sc Feedback Vertex Set in Tournaments} (\textsf{TFVS}) problem, from the perspective of parameterized algorithms and exact exponential time algorithms. We refer to the textbooks of Cygan et al.~\cite{pc_book} and Fomin and Kratsch~\cite{FominK10} for an introduction to these fields. A {\em tournament} is a dircted graph $T$ such that every pair of vertices is connected by an arc, and FVST is simply DFVS when the input graph is required to be a tournament.

Even this restricted variant of DFVS has applications in voting systems and rank aggregation~\cite{Dom201076}, and is quite well-studied~\cite{CaiDZ00,Dom201076,GaspersM13,RamanS06}. \textsf{TFVS} was shown to be fixed parameter tractable by Raman and Saurabh~\cite{RamanS06}, who obtained an algorithm with running time $O(2.42^k\cdot n^{O(1)})$. In 2006, Dom et al.~\cite{DomGHNT06conf} (see also~\cite{Dom201076}) gave 
an algorithm for \textsf{TFVS} with running time $2^kn^{O(1)}$. Prior to our work this was the fastest known parameterized algorithm for the problem. The fastest exact exponential time algorithm for \textsf{TFVS} was due to Gaspers and Mnich~\cite{GaspersM13} and has running time $O(1.674^n)$.

Our main result is a new algorithm for \textsf{TFVS}. The running time of our algorithm is upper bounded by $O(1.618^k + n^{O(1)})$ and by $O(1.46^n)$. Thus, we give a single algorithm that simultaneously significantly improves over the previously best known parameterized algorithm and exact exponential time algorithm for the problem. It is worth noting that the algorithm of Gaspers and Mnich~\cite{GaspersM13} also lists all inclusion minimal feedback vertex sets in the input tournament, while our algorithm can not be used for this purpose.

On the way to proving our main result we prove a new balanced edge partition theorem for general undirected graphs. In particular we show that the vertices of any undirected $m$-edge graph $G$ of maximum degree $d$ can be colored white or black in such a way that for each of the two colors, the number of edges with both endpoints of that color is between $m/4-d/2$ and $m/4$+$d/2$. This partition theorem is of independent interest, and we believe it will find further applications both in algorithms and in graph theory.

\smallskip
\noindent
{\bf Methods.} As a preliminary step our algorithm applies the kernel of Dom et al.~\cite{Dom201076} to ensure that the number of vertices in the input tournament is upper bounded by $O(k^3)$. After this step, our algorithm has three phases.


In the first phase the algorithm finds, in subexponential time, a ``large enough'' set $M$ of vertices {\em disjoint} from the solution $S$ sought for, such that $M$ is evenly distributed in the topological ordering of $T - S$. 
From the set $M$ we can infer a rough sketch of the unique topological ordering of $T - S$ without knowing the solution $S$. More concretely, every vertex $v$ gets a tentative position in the ordering, and we know that if $v$ is not deleted, then $v$'s position in the topological order of $T-S$ is close to this tentative position. 

We can now use this tentative ordering to identify {\em conflicts} between two vertices $u$ and $v$. Two vertices $u$ and $v$ are in conflict if their tentative positions are so far apart that we know the order in which they have to appear in the topological sort of $T-S$, but the arc between $u$ and $v$ goes in the opposite direction. Thus, if $u$ and $v$ are in conflict then at least one of them has to be in the solution feedback vertex set $S$.

The second phase of the algorithm eliminates vertices that are in conflict with more than one other vertex. Suppose that $u$ is in conflict with both $v$ and $w$. If $u$ is not deleted then both $v$ and $w$ have to be deleted. The algorithm finds the optimal solution by branching and recursively solving the instance where $u$ is deleted, and the instance where $u$ is not deleted but both $v$ and $w$ are deleted. This branching step is the bottleneck of the algorithm and gives rise to the $O(1.618^kn^{O(1)})$ and the $O(1.46^n)$ running time bounds.

The third and last phase of the algorithm deals with the case where every vertex has at most one conflict. Here we apply a divide and conquer approach that is based on our new partitioning theorem.

%

\smallskip
\noindent
{\bf Organization of the paper.} In Section \ref{pre} we set up definitions and notation, and state a few useful preliminary results.
Section~\ref{family} describes and analyzes the first phase of the algorithm.  Section~\ref{parafeed} contains the second phase, as well as the final analysis of the correctness and running time of the entire algorithm, conditioned on the correctness and running time bound of the third and last phase. In Section~\ref{par} we formally state and prove our new decomposition theorem for undirected graphs, while the description and analysis of the third phase of the algorithm is deferred to Section~\ref{dfeed}.

\section{Preliminaries}\label{pre}
In this paper, we work with graphs that do not contain any self loops. 
A {\em multigraph} is a graph that may contain more than one edge between the same pair of vertices. A graph is {\em mixed} if it can contain both directed and undirected edges. We will be working with mixed multigraphs; graphs that contain both directed and undirected edges, and where two vertices may have several edges between them. 

When working with a mixed multigraph $G$ we use $V(G)$ to denote the vertex set, $E(G)$ to denote the set of directed edges, and $\mathcal{E}(G)$ to denote the set of undirected edges of $G$. %
%
%
A directed edge from $u$ to $v$ is denoted by $uv$.  A \emph{supertournament} is a directed graph $T$ such that for every pair of vertices $u$, $v$ at least one (and possibly both) edges $uv$ and $vu$ are edges of $T$. Thus, every tournament is a supertorunament, but not vice versa.


\smallskip
\noindent
{\bf Graph Notation.} In a directed graph $D$, the set of \emph{out-neighbors} of a vertex $v$ is defined as $N^+(v):=\{u|vu\in E(D)\}$. Similarly, the set of \emph{in-neighbors} of a vertex $v$ is defined as $N^-(v):=\{u|uv\in E(D)\}$. A \emph{triangle} in a directed graph is a directed cycle of length $3$. Note that in this paper, whenever the term triangle is used it refers to a directed triangle. A \emph{topological sort} of a directed graph $D$ is a permutation $\pi:V(D)\mapsto [n]$ of the vertices of the graph such that for all edges $uv\in E(D)$, $\pi(u)<\pi(v)$. Such a permutation exists for a directed graph if and only if the directed graph is acyclic. For an acyclic tournament, the topological sort is unique. 

For a graph or multigraph $G$ and vertex $v$, $G - v$ denotes the graph obtained from $G$ by deleting $v$ and all edges incident to $v$. For a vertex set $S$, $G - S$ denotes the graph obtained from $G$ by deleting all vertices in $S$ and all edges incident to them.

For any set of edges $C$ directed or undirected and set of vertices $X$, the set $N_X(C)$ represents the subset of vertices of $X$ which are incident on an edge in $C$. For a vertex $v\in V(G)$, the set $N_C(v)$ represents the set of vertices $w\in V(G)$ such that there is an undirected edge $wv\in C$.

\smallskip
\noindent
{\bf Fixed Parameter Tractability.} A {\em parameterized problem} $\Pi$ is a subset of $\Sigma^* \times \mathbb{N}$. A parameterized problem $\Pi$ is said to be \emph{fixed parameter tractable}(\textsc{FPT}) if there exists an algorithm that takes as input an instance $(I, k)$ and decides whether $(I, k) \in Pi$ in time $f(k)\cdot n^c$, where $n$ is the length of the string $I$, $f(k)$ is a computable function depending only on $k$ and $c$ is a constant independent of $n$ and $k$. 

A \emph{kernel} for a parameterized problem $\Pi$ is an algorithm that given an instance $(T,k)$ runs in time polynomial in $|T|$, and outputs an instance $(T',k')$ such that $|T'|,k' \leq g(k)$ for a computable function $g$ and $(T,k) \in Pi$ if and only if $(T',k') \in \Pi$. For a comprehensive introduction to \textsc{FPT} algorithms and kernels, we refer to the book by Cygan et al.~\cite{pc_book}.

\smallskip
\noindent
{\bf Preliminary Results.}
If a tournament is acyclic then it does not contain any triangles. It is a well-known and basic fact that the converse is also true, see e.g.~\cite{Dom201076}.
\begin{lemma}\label{triangle}\cite{Dom201076}
A tournament is acyclic if and only if it contains no triangles.\end{lemma}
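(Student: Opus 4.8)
The plan is to prove the two implications separately, with essentially all the work in the non-trivial direction. For the easy direction, an acyclic tournament contains no triangle simply because a triangle is, by the conventions fixed in Section~\ref{pre}, a directed cycle (of length three), and an acyclic graph contains no directed cycle at all; so nothing needs to be done here beyond unpacking the definition.

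For the converse — if the tournament $T$ contains no triangle then $T$ is acyclic — I would argue by contradiction via a minimal counterexample. Suppose $T$ has no triangle but nonetheless contains a directed cycle, and among all directed cycles of $T$ choose one of minimum length, say $C = v_1 v_2 \cdots v_\ell v_1$. By the triangle-freeness hypothesis we must have $\ell \geq 4$, so in particular the vertices $v_1, v_3, v_4, \ldots, v_\ell$ are pairwise distinct. Since $T$ is a tournament, exactly one of the arcs $v_1 v_3$ and $v_3 v_1$ lies in $E(T)$, and the heart of the proof is the case distinction on this arc: if $v_1 v_3 \in E(T)$, then $v_1 v_3 v_4 \cdots v_\ell v_1$ is a directed cycle of length $\ell - 1$, contradicting the minimality of $C$; if instead $v_3 v_1 \in E(T)$, then $v_1 v_2 v_3 v_1$ is a directed triangle, contradicting the hypothesis. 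In either case we reach a contradiction, so $T$ has no directed cycle and is therefore acyclic.

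There is no genuine obstacle in this argument; the only points requiring a little care are the base case $\ell = 3$ (which is exactly ruled out by the no-triangle assumption, so the minimal cycle really does have length at least four) and checking that replacing the path $v_1 v_2 v_3$ by the single arc $v_1 v_3$ yields a bona fide shorter directed cycle rather than something degenerate — which holds precisely because $\ell \geq 4$ guarantees $v_3 \neq v_1$ and that the remaining vertices of $C$ are untouched. An alternative route would be to build a topological ordering directly (e.g.\ by repeatedly extracting a source, and arguing that triangle-freeness forces a source to exist), but the minimal-counterexample ``chord'' argument above is shorter and self-contained, so that is the one I would present.
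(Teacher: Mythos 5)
Your proof is correct: the minimal-cycle ``chord'' argument (if the arc $v_1v_3$ is present you shortcut the cycle, otherwise $v_1v_2v_3v_1$ is a triangle) is the standard proof of this folklore fact, and you handle the only delicate points ($\ell\geq 4$ and the non-degeneracy of the shortened cycle) correctly. The paper itself gives no proof --- it states the lemma as a well-known fact and cites Dom et al.~\cite{Dom201076} --- so there is no in-paper argument to compare against; your write-up simply supplies the routine argument the authors chose to omit.
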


Lemma~\ref{triangle} immediately gives rise to a folklore greedy $3$-approximation algorithm for \textsf{TFVS}: as long as $T$ contains a triangle, delete all the vertices in this triangle.

\begin{lemma}[folklore]\label{approximation}
There is a polynomial time algorithm that given as input a tournament $T$ and integer $k$, either correctly concludes that $T$ has no feedback vertex set of size at most $k$ or outputs a feedback vertex set of size at most $3k$. 
\end{lemma}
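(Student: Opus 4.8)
The plan is to combine Lemma~\ref{triangle} with a straightforward greedy triangle-packing argument. First I would describe the algorithm: maintain a vertex set $S$, initially empty, and repeatedly search for a (directed) triangle in $T - S$; whenever one is found, add its three vertices to $S$ and continue. A triangle can be located in polynomial time, for instance by examining all triples of vertices, so each iteration is polynomial; since every iteration adds three new vertices to $S$, there are at most $n/3$ iterations and the whole procedure runs in polynomial time. The procedure stops either when $T - S$ contains no triangle, or as soon as $|S| > 3k$ (equivalently, after more than $k$ iterations).

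Next I would argue correctness in the two termination cases. If the procedure halts because $T - S$ is triangle-free, then by Lemma~\ref{triangle} the tournament $T - S$ is acyclic, so $S$ is a feedback vertex set of $T$; since we stopped before exceeding $3k$ vertices, $|S| \le 3k$ and the algorithm outputs $S$. If instead the procedure halts because more than $k$ triangles have been extracted, I would observe that the triangles extracted in distinct iterations are pairwise vertex-disjoint: each newly found triangle lies in $T - S$, and $S$ already contains all vertices of every previously extracted triangle. Hence $T$ contains at least $k+1$ pairwise vertex-disjoint triangles. Any feedback vertex set of $T$ must intersect every cycle, in particular each of these triangles, and because the triangles are vertex-disjoint it must contain at least $k+1$ vertices; therefore $T$ has no feedback vertex set of size at most $k$, which is exactly the conclusion the algorithm reports in this case.

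There is essentially no hard step here. The only place that uses structural input is the appeal to Lemma~\ref{triangle} to certify acyclicity of $T - S$ from the absence of triangles, and the only point requiring a moment's care is the vertex-disjointness of the packed triangles, which is immediate from always searching inside $T - S$. Everything else—the polynomial running time, the size bound $|S| \le 3k$, and the packing lower bound on the size of any feedback vertex set—is routine bookkeeping.
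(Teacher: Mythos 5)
Your proof is correct and is exactly the argument the paper has in mind: the paper only sketches this folklore lemma in one sentence (repeatedly delete all three vertices of a triangle), and your write-up fills in the same greedy triangle-packing reasoning, using Lemma~\ref{triangle} for acyclicity and vertex-disjointness of the extracted triangles for the lower bound. No differences worth noting.
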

In fact, \textsf{TFVS} has a polynomial time factor $2.5$-approximation, due to Cai et al.~\cite{CaiDZ00}. However, the simpler algorithm from Lemma~\ref{approximation} is already suitable to our needs.
%

The preliminary phase of our algorithm for \textsf{TFVS} is the kernel of Dom et al.~\cite{Dom201076}. We will need some additional properties of this kernel that we state here. Essentially, Lemma~\ref{kernel} allows us to focus on the case when the number of vertices in the input tournament is $O(k^3)$.

\begin{lemma}\label{kernel}\cite{Dom201076} There is a polynomial time algorithm that given as input a tournament $T$ and integer $k$, runs in polynomial time and outputs a tournament $T'$ and integer $k'$ such that $|V(T')| \leq |V(T)|$, $|V(T')| = O(k^3)$, $k' \leq k$, and $T'$ has a feedback vertex set of size at most $k'$ if and only if $T$ has a feedback vertex set of size at most $k$.
\end{lemma}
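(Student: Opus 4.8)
The plan is to reproduce, with a little repackaging, the cubic-vertex kernelization of Dom et al.~\cite{Dom201076}: it combines two polynomial-time reduction rules with a counting argument that certifies the size bound.

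\textbf{Reduction rules.} First I would apply \textbf{Rule 1}: delete any vertex that lies in no directed triangle. This is sound by Lemma~\ref{triangle}: if $v$ lies in no triangle of $T$ and $S'$ is a feedback vertex set of $T-v$, then $T-S'$ contains no triangle at all -- a triangle avoiding $v$ would lie inside the acyclic tournament $T-v-S'$, and there is no triangle through $v$ whatsoever -- so $T-S'$ is acyclic; the converse direction is immediate since deleting a vertex can only shrink an instance. Rule~1 only removes vertices and leaves $k$ unchanged, so $|V(T')|\le|V(T)|$ and $k'\le k$ are respected. Next comes \textbf{Rule 2}, the arc-multiplicity rule of Dom et al. The underlying observation is that if an arc $uv$ lies in more than $k$ triangles then those triangles, each of the form $u\to v\to w\to u$ and hence determined by $w$, have $k+1$ pairwise distinct third vertices, so every feedback vertex set of size at most $k$ contains $u$ or $v$; Dom et al.\ exploit this to either force a vertex into the solution and decrement $k$, or discard the instance. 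I would take their rule verbatim. The only property I use afterwards is that a reduced instance has every vertex in a triangle and no arc in more than $k$ triangles. (The $3$-approximation of Lemma~\ref{approximation} is available as a cheap preprocessing step, e.g.\ to reject outright when no feedback vertex set of size $3k$ exists.)

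\textbf{The size bound.} I claim a reduced yes-instance has $O(k^3)$ vertices. Fix a feedback vertex set $S$, $|S|\le k$; then $D=T-S$ is acyclic, and I fix its unique topological order $u_1<\dots<u_m$, so it suffices to show $m=O(k^3)$ (since $|V(T)|\le m+k$). By Rule~1 each $u_i$ lies in a triangle, and since $D$ is transitive this triangle uses at least one vertex of $S$; exactly one of two cases holds. \emph{Case (a): the triangle uses two vertices $s,s'$ of $S$.} Writing the arc between them as $s\to s'$, the triangle must read $s\to s'\to u_i\to s$, so $u_i$ is the third vertex of a triangle through the arc $ss'$; by Rule~2 there are at most $k$ such vertices for each of the $\binom{|S|}{2}$ arcs of $T[S]$, so case (a) accounts for at most $\binom{k}{2}k=O(k^3)$ vertices in total. \emph{Case (b): the triangle uses exactly one vertex $s$ of $S$} (and one other $D$-vertex). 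Among those two $D$-vertices, $u_i$ is an out-neighbour of $s$ in $D$ if it is the earlier one and an in-neighbour if the later one -- in either case $u_i$ lies, in the order of $D$, between the first out-neighbour of $s$ and the last in-neighbour of $s$. Fix $s$ and let $a$, resp.\ $b$, be the position of its first out-neighbour, resp.\ last in-neighbour, in $D$ (if either neighbour class in $D$ is empty, $s$ contributes nothing here). Then the arc from $s$ to $u_a$ lies in at least $|\{\ell>a:u_\ell\text{ is an in-neighbour of }s\}|$ triangles and the arc from $u_b$ to $s$ in at least $|\{\ell<b:u_\ell\text{ is an out-neighbour of }s\}|$ triangles; since all positions below $a$ are in-neighbours of $s$ and all positions above $b$ are out-neighbours, adding these two counts -- each at most $k$ by Rule~2 -- and using that the two neighbour classes partition the $m$ positions of $D$ gives $b-a\le 2k-1$. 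Hence $s$ accounts for at most $2k$ vertices in case (b), and all of $S$ for $O(k^2)$. Summing the two cases, $m=O(k^3)$.

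\textbf{Assembly and the main obstacle.} The kernelization applies Rules~1 and~2 exhaustively -- each application removes a vertex or decrements $k$, so the process is polynomial -- and then outputs the reduced instance if it has at most the $O(k^3)$ vertex bound, and otherwise (when, by the contrapositive of the size bound, the instance is necessarily a no-instance) outputs a fixed constant-size no-instance. As $k$ never increases and, when the bound is exceeded, the reduced tournament has more than a constant number of vertices, the requirements $|V(T')|\le|V(T)|$, $|V(T')|=O(k^3)$ and $k'\le k$ all hold. The one genuine obstacle to a fully self-contained proof is Rule~2: one must give a rule that is simultaneously sound for the exact problem, runs in polynomial time, and leaves no arc in more than $k$ triangles -- and here I would simply defer to Dom et al. The rest is routine; the single slightly delicate computation is the case-(b) windowing, whose only trick is to apply the multiplicity bound to the two boundary arcs $su_a$ and $u_bs$.
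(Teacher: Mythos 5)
The paper does not prove this lemma at all: it is imported wholesale from Dom et al.~\cite{Dom201076}, so there is no in-paper argument to compare your reconstruction against; the only question is whether your write-up stands on its own. The counting half does. Granting the two invariants of a reduced instance (every vertex lies in a triangle, no arc lies on more than $k$ triangles), your case analysis is correct: case (a) contributes at most $k\binom{k}{2}$ vertices, and the case (b) window argument is sound, since every position in $[a,b]$ is charged to one of the two boundary arcs $su_a$ and $u_bs$, each of which lies on at most $k$ triangles, giving $b-a+1\le 2k$ and $O(k^2)$ vertices over all of $S$; together with $|V(T)|\le m+k$ this yields the $O(k^3)$ bound, and the final assembly (outputting a constant-size no-instance when the bound is exceeded) is unproblematic.

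The genuine gap is Rule 2, which is where the whole difficulty of the lemma lives. The observation you state --- if an arc $uv$ lies on more than $k$ triangles then every solution of size at most $k$ contains $u$ or $v$ --- does not by itself give a reduction rule: a kernelization cannot branch, and nothing tells you which of $u$, $v$ to force into the solution, so ``force a vertex into the solution and decrement $k$, or discard the instance'' is not justified as written. You explicitly defer the formulation and safeness of this rule to Dom et al.; but the lemma being proved is itself nothing more than a citation of Dom et al., so deferring exactly this step means the proposal is not an independent proof --- it derives the size bound from invariants it never establishes, and it also silently assumes that the rules of Dom et al.\ have the precise form (no arc on more than $k$ triangles after exhaustive application) that your counting needs. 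To close the gap you would have to exhibit a concrete polynomial-time rule (for instance a sunflower-style vertex or arc rule) and prove both that it is safe for the exact problem and that its exhaustive application enforces the arc-multiplicity bound; that argument is precisely the content hiding behind the citation.
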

%



\section{Finding an Undeletable, Evenly Spread Out Set}\label{family}
Consider a tournament $T$ that has a feedback vertex set $H$ of size at most $k$. Then $T - H$ is acyclic. Consider now the topological order of $T - H$. Let $M$ be the set of vertices of $T-H$ whose position in the topological order is congruent to $0$ mod $\log^2 k$. We have now found a set disjoint from $H$ such that, in the topological order of $T - H$ the distance between two consecutive vertices of $M$ is $O(\log^2 k)$. We shall see later in the article that having such a set at our disposal is very useful for finding the optimum feedback vertex set $H$. Of course there is a catch; we defined $M$ using the solution $H$, but we want to use $M$ to find the solution $H$. In the rest of this section we show how to find a set $M$ with the above properties without knowing the optimum feedback vertex set $H$ in advance. We begin with a few definitions.



\begin{definition}
 Let $D$ be a directed graph. For any pair of vertices $u,v\in V(D)$ the set $\emph{between}(D,u,v)$ is defined as $N^{+}(u)\cap N^{-}(v)\setminus \{u,v\}$. 
\end{definition}
Observe that for an acyclic tournament $T$, \emph{between}$(T,u,v)$ is exactly the set of vertices coming after $u$ and before $v$ in the unique topological ordering of $T$.
\begin{definition} Let $D$ be a directed graph and $S\subseteq V(D)$. Two vertices $u,v \in S$ are called \emph{$S$-consecutive} if $uv \in E(D)$ and $\emph{between}(D,u,v)\cap S=\emptyset$.
\end{definition}
In an acyclic tournament $T$ and vertex set $S$, two vertices $u$ and $v$ in $S$ are $S$-consecutive if no other vertex of $S$ appears between $u$ and $v$ in the topological ordering. 
\begin{definition}
Let $D$ be a directed graph and $S\subseteq V(D)$. We define the set of $S$-blocks in $D$. Each pair of $S$-\emph{consecutive} vertices $u$ and $v$ defines the $S$-block $\emph{between}(D,u,v)$. Further, each vertex $u \in S$ with no in-neighbors in $S$ defines an $S$-block $N^-(u)$. Each vertex $u \in D$ with no out-neighbors in $S$ defines the $S$-block $N^+(u)$. The {\em size} of an $S$-block is its cardinality.
\end{definition}
In an acyclic tournament $T$ the $S$-blocks form a partition of $V - S$, where two vertices are in the same block if and only if no vertex of $S$ appears between them in the topological order of $T$. 
%
%
For example, consider an acyclic tournament $T=u_0u_1...u_{11}$ where vertices are topologically sorted. Let $S=\{u_i | i \mod 4=1\}$. $between(T,u_1,u_9)=\{u_2,u_3,u_4,u_5,u_6,u_7,u_8\}$. $u_5$ and $u_9$ are $S$-consecutive and $\{u_6,u_7,u_8\}$ is an $S$-block. The set of all $S$-blocks in $T$ is $\{\{u_0\},\{u_2,u_3,u_4\},\{u_6,u_7,u_8\},\{u_{10},u_{11}\}\}$.
 
 \begin{lemma}\label{undeletable} 
 There exists an algorithm that given a tournament $T$ with $|V(T)|=O(k^3)$ where $k$ is an integer, outputs a family of sets $\mathcal{M}, |\mathcal{M}| = 2^{O(\frac{k}{\log k})}$ in $2^{O(\frac{k}{\log k})}$ time, such that for every feedback vertex set $H$ of $T$ of size at most $k$, $\exists M\in \mathcal{M}$, such that :
 \begin{enumerate}
  \item $M\cap H=\emptyset$,
  \item The size of any $M$-block in $T-H$ is at most $2\log^2k$.
  \end{enumerate}
 Furthermore, $\mathcal{M}$ can be enumerated in polynomial space.
 \end{lemma}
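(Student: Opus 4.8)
The plan is to take $\mathcal M$ to be a sufficiently large family of random low-density vertex subsets and then derandomize. First I would reformulate the goal. Fix a feedback vertex set $H$ with $|H|\le k$; since $H$ is a feedback vertex set, $T-H$ is acyclic and has a unique topological order $\sigma$, and the $M$-blocks of $T-H$ are exactly the prefix of $\sigma$ before its first $M$-vertex, the maximal $M$-free runs of $\sigma$ between consecutive $M$-vertices, and the suffixes of $\sigma$ after its last $M$-vertex. Hence a set $M$ satisfies properties (1) and (2) for $H$ if and only if $M\cap H=\emptyset$ and $M$ meets every set of $2\log^2k+1$ consecutive vertices of $\sigma$ (including such sets at the two ends). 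There are at most $|V(T-H)|\le n=O(k^3)$ such ``windows'', and each is an $O(\log^2k)$-element subset of $V(T)\setminus H$.

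Next I would run a probabilistic argument for a \emph{fixed} $H$. Let $M$ contain each vertex of $T$ independently with probability $p=\Theta(1/\log k)$. Then $\Pr[M\cap H=\emptyset]=(1-p)^{|H|}\ge(1-p)^k\ge 2^{-ck/\log k}$ for some constant $c$, since $p=O(1/\log k)$; and conditioning on $M\cap H=\emptyset$ does not change the distribution of $M$ on $V(T)\setminus H$, so for a fixed window $W$ we have $\Pr[W\cap M=\emptyset\mid M\cap H=\emptyset]=(1-p)^{|W|}=(1-p)^{2\log^2k+1}\le\tfrac1{2n}$, using $p=\Omega(1/\log k)$ and $n=O(k^3)$. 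A short calculation shows there is a single $p=\Theta(1/\log k)$ making both estimates hold. A union bound over the $\le n$ windows then gives $\Pr[\text{all windows hit}\mid M\cap H=\emptyset]\ge\tfrac12$, so a random $M$ is ``good for $H$'' with probability at least $q:=2^{-O(k/\log k)}$. Taking $\mathcal M$ to consist of $t:=\Theta(q^{-1}k\log k)=2^{O(k/\log k)}$ independent such sets, the probability that no member of $\mathcal M$ is good for a fixed $H$ is at most $(1-q)^t\le e^{-qt}<\binom n{\le k}^{-1}=2^{-O(k\log k)}$; a union bound over all $H$ of size at most $k$ then shows that a family $\mathcal M$ with the stated properties and $|\mathcal M|=2^{O(k/\log k)}$ exists. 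This already gives a randomized algorithm, and storing $\mathcal M$ as a list of subsets needs only polynomial space.

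To obtain the deterministic, polynomial-space algorithm claimed in the lemma I would derandomize the sampling, and this is the step I expect to be the main obstacle. The crucial feature is that the analysis above only refers to the law of $M$ on sets of the form $H\cup W$, which have size $k+O(\log^2k)=O(k)$: the bound on $\Pr[M\cap H=\emptyset]$ uses $|H|$-wise independence, and the conditional union bound uses only that the law on each $H\cup W$ factorizes. Hence the same conclusion holds if $M$ is drawn from an explicit $(k+O(\log^2k))$-wise independent $p$-biased distribution on $V(T)$, for which there are standard constructions computable in polynomial space. What remains is to select, in polynomial space, a subfamily of size $2^{O(k/\log k)}$ from this pseudorandom ensemble that still contains a good set for every one of the $2^{O(k\log k)}$ choices of $H$; this is a covering/hitting-set problem for the tests ``avoid $H$ and hit all windows of $T-H$'', each of density at least $q=2^{-O(k/\log k)}$, and carrying it out within the claimed size and space bounds (for instance by a method-of-conditional-probabilities argument driven by these tests) is where the genuine work lies. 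Note that the argument never constructs an actual topological order of any $T-H$; it only uses that such an order exists and has $O(k^3)$ windows, so the $3$-approximation of Lemma~\ref{approximation} is not needed for this lemma.
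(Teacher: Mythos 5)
Your probabilistic argument for the \emph{existence} of a suitable family is essentially correct: reducing property (2) to hitting all length-$(2\log^2k+1)$ windows of the topological order of $T-H$, sampling each vertex with probability $p=\Theta(1/\log k)$, and union-bounding first over the $O(k^3)$ windows and then over the $2^{O(k\log k)}$ candidate sets $H$ does show that a family of size $2^{O(k/\log k)}$ exists, and would even give a Monte Carlo construction. But the lemma asks for an algorithm that \emph{outputs} such a family in $2^{O(k/\log k)}$ time and enumerates it in polynomial space, and this is exactly the part you defer. A one-sided randomized construction is not enough here (if the sampled family misses some $H$, the overall \textsf{TFVS} algorithm would wrongly answer no), and your proposed derandomization route does not obviously fit the budget: an explicit $(k+O(\log^2k))$-wise independent $p$-biased sample space has size $2^{\Theta(k\log k)}$, so enumerating it is already too slow, and pruning it to a subfamily of size $2^{O(k/\log k)}$ by conditional probabilities would have to be driven by the $2^{O(k\log k)}$ tests indexed by $H$, again blowing the $2^{O(k/\log k)}$ time bound. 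So the ``genuine work'' you flag is precisely the content of the lemma, and it is not clear your route can deliver it.

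The paper avoids randomness altogether, and the tool it uses is the one you explicitly discard: the $3$-approximation of Lemma~\ref{approximation}. It first computes a feedback vertex set $X$ with $|X|\le 3k$, takes the topological order of the acyclic part $Y=V(T)\setminus X$, and colors $Y$ cyclically by position modulo $\lfloor\log^2k\rfloor$. A member of $\mathcal{M}$ is then specified by a tuple $\langle c,\hat H,\hat R,\hat X\rangle$: a color class $Y_c$ (which by averaging meets any $H$ of size $k$ in at most $k/\log^2k$ vertices for some $c$), a guessed set $\hat H\subseteq Y_c$ of those few intersection vertices to delete, a small repair set $\hat R$, and a small subset $\hat X\subseteq X$ handling the approximate solution's vertices outside $H$. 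Since all guessed sets have size $O(k/\log^2k)$ inside a universe of size $O(k^3)$, the number of tuples is $2^{O(k/\log k)}$ and they can be enumerated deterministically in polynomial space, with a direct verification that the resulting $M$ is disjoint from $H$ and its blocks in $T-H$ have size at most $2\log^2k$. If you want to salvage your approach, you would need an explicit pseudorandom object tailored to these ``avoid $H$, hit all windows'' tests with seed length $O(k/\log k)$, which is a nontrivial construction in its own right; the paper's approximation-anchored guessing sidesteps this entirely.
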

\begin{proof}
Let $X$ be a feedback vertex set of size at most $3k$ obtained using Lemma \ref{approximation}. Let $Y:=V(T)\setminus X$ and $v_0v_1...v_{|Y|-1}$ be the topological sort of $T[Y]$ such that the edges in $T[Y]$ are directed from left to right. Color $Y$ using $\lfloor\log^2k\rfloor$ colors such that for each $c'\in [0,...,|Y|-1]$, $v_{c'}$ gets color $c'$ mod $\lfloor\log^2k\rfloor$. For each $c\in [0,...,\lfloor\log^2k\rfloor-1]$, let $Y_c$ be the set of vertices in $Y$ which get color $c$. Each $M\in\mathcal{M}$ is specified by a 4-tuple $\langle c, \hat{H}, \hat{R}, \hat{X}\rangle$ where
\begin{itemize}
\item $c$ is a color in the above coloring of $Y$,
\item $\hat{H}\subseteq Y_c$ such that $|\hat{H}|\leq \frac{k}{\log^2 k}$,
\item $\hat{R}\subseteq Y\setminus Y_c$, $|\hat{R}|\leq |\hat{H}|$, and
\item $\hat{X}\subseteq X$ such that $|\hat{X}|\leq \frac{3k}{\log^2 k}$.
\end{itemize}
For each 4-tuple $\langle c, \hat{H}, \hat{R}, \hat{X}\rangle$, let $M:=(Y_c\setminus\hat{H})\bigcup \hat{R}\bigcup \hat{X}$. Hence, $|\mathcal{M}|$ is upper bounded by the maximum number of such 4-tuples. 
\begin{eqnarray*}
|\mathcal{M}| 
&\leq& 2^{\log(\log^2k)}\times k^{\frac{6k}{\log^2k}}\times (3k)^{\frac{3k}{\log^2k}}\\
&=& 2^{O(\frac{k}{\log k})} 
\end{eqnarray*} 
Clearly, all such 4-tuples can be enumerated in polynomial space thereby providing an enumeration of $\mathcal{M}$.

We prove the correctness of the above algorithm by showing that for every feedback vertex set $H$ of $T$ of size at most $k$, $\mathcal{M}$ contains a set $M$ which satisfies the properties listed in the statement of the lemma. Let $H$ be an arbitrary feedback vertex set of $T$ of size at most $k$.


For each $j\in[0,...,\lfloor\log^2k\rfloor-1]$, let $H_j:=Y_j\cap H$. By averaging, there is a color $c$ such that $0<|H_c|\leq \frac{k}{\log^2k}$. For this color $c$, let $\hat{H}:=H_c$.
Consider a set $\hat{R}$ obtained as follows: for every vertex $v\in H_c$, pick the first vertex \emph{after} $v$ (if there is any) in $Y\setminus (Y_c \cup H)$ in the topological ordering of $T[Y]$. 
Note that $T[X\setminus H]$ is acyclic. Color $X\setminus H$ using $\lfloor\log^2k\rfloor$ colors as was done for $Y$.  Let $\hat{X}$ be the set of all vertices colored 0 in this coloring. The size of any $\hat{X}$-block in $T[X\setminus H]$ is $\log^2k$. Clearly, $|\hat{X}|\leq \frac{3k}{\log^2k}$. 

The 4-tuple $\langle c, \hat{H}, \hat{R}, \hat{X}\rangle$ described above satisfies all the properties listed in the construction of $\mathcal{M}$. Let $M:=(Y_c\setminus H_c)\bigcup \hat{R}\bigcup \hat{X}$. Clearly, $M\cap H=\emptyset$ and $M\in \mathcal{M}$. Since the size of any $[(Y_c\setminus H_c)\bigcup \hat{R}]$-\emph{block} in $Y$ is at most $\log^2k$, the size of any $M$-block in $T - H$ is at most $2\log^2k$.
\end{proof} 

Lemma~\ref{undeletable} gets us quite close to our goal. Indeed, for any feedback vertex set $H$ of size at most $k$ we will find a set $M$ such that the $M$-blocks in $T - H$ are small. However, the $M$-blocks of $T$ do not have to be small, because they could contain many vertices from $H$. The next lemma deals with this problem.

\begin{definition}
Let $D$ be a directed graph. A vertex $v\in V(D)$ is \emph{consistent} with a set $M\subseteq V(D)$ if there are no cycles in $D[M\cup v]$ containing $v$.
\end{definition}

Define a function $\mathcal{I}$ that given a directed graph $D$ and a set $M\subseteq V(D)$ outputs a set of vertices \emph{inconsistent} with $M$. Define another function $\mathcal{L}$ that given a directed graph $D$, a set $M\subseteq V(D)$ and an integer $k$ outputs a set of vertices which is the union of all $M$-blocks of size at least $2\log^4k$ in $D-\mathcal{I}(D,M)$.  
\begin{lemma}\label{blocks}
There exists an algorithm that given a tournament $T$ on $O(k^3)$ vertices where $k$ is an integer outputs a family of set pairs $\mathcal{X}=\{(M_1,P_1),(M_2,P_2),...,(M_l,P_l)\}$,$|\mathcal{X}|=2^{O(\frac{k}{\log k})}$ in $2^{O(\frac{k}{\log k})}$ time such that for every feedback vertex set $H$ of size at most $k$, there exists $(M,P)\in \mathcal{X}$ such that
 \begin{enumerate}
 \item $M\cap H=\emptyset$,
 \item $P\subseteq H$,
 \item Every vertex of $V(T)\setminus P$ is consistent with $M$, and
 \item The size of every $M$-block in $T-P$ is at most $2\log^4k$. 
 \end{enumerate}
 Furthermore, $\mathcal{X}$ can be enumerated in polynomial space.
\end{lemma}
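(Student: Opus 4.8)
The plan is to refine the family $\mathcal{M}$ produced by Lemma~\ref{undeletable} into $\mathcal{X}$ by attaching to each $M\in\mathcal{M}$ a small number of candidate sets $P$. Fix a feedback vertex set $H$ with $|H|\le k$ and let $M\in\mathcal{M}$ be the set guaranteed by Lemma~\ref{undeletable}, so $M\cap H=\emptyset$ and every $M$-block of the acyclic tournament $T-H$ has size at most $2\log^2 k$. The first observation is that every vertex inconsistent with $M$ must lie in $H$: such a vertex $v$ closes a cycle using only vertices of $M\cup\{v\}$, and since $M\cap H=\emptyset$ and $T-H$ is acyclic, $v$ cannot avoid $H$. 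In particular $\mathcal{I}(T,M)\subseteq H$, so $|\mathcal{I}(T,M)|\le k$, and moreover $M\cap\mathcal{I}(T,M)=\emptyset$ because $T[M]$ is acyclic. Hence in $T':=T-\mathcal{I}(T,M)$ we have $M\subseteq V(T')$, the tournament $T'[M]=T[M]$ is acyclic, and every vertex of $T'$ is consistent with $M$.

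The key structural fact I would establish next is that whenever a tournament $T''$ contains $M$, $T''[M]$ is acyclic, and every vertex of $T''$ is consistent with $M$, the $M$-blocks of $T''$ form a partition of $V(T'')\setminus M$: writing $m_1,\dots,m_t$ for the topological order of $T''[M]$, consistency forces each $v\notin M$ to satisfy $m_jv\in E$ for all $j$ up to some index and $vm_j\in E$ for all larger $j$, so $v$ lands in a well-defined ``slot'' $\mathrm{between}(T'',m_i,m_{i+1})$ (or a boundary block). This applies to $T'$, and also to $T-H$ (which is acyclic, so this is already the stated partition property), and crucially the slot of a vertex $v\notin M$ depends only on the arcs between $v$ and $M$, hence is the same in $T'$, in $T-H$, and in $T-P$ for any $P$ disjoint from $M$.

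Now consider a ``big'' $M$-block $B$ of $T'$, i.e.\ $|B|\ge 2\log^4 k$. By the previous paragraph $B\setminus H$ is contained in a single $M$-block of $T-H$, which has size at most $2\log^2 k$; thus $|B\setminus H|\le 2\log^2 k$, and for $k$ large enough $|B\cap H|\ge |B|/2$. Since the big blocks are pairwise disjoint, summing gives $\sum_{B\text{ big}}|B|\le 2|H|\le 2k$, so $\mathcal{L}(T,M,k)$ (the union of the big $M$-blocks of $T'$) has size at most $2k$ and there are at most $k/\log^4 k$ big blocks, whence $\bigcup_{B\text{ big}}(B\setminus H)$ has size at most $2k/\log^2 k$. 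This motivates the construction: for each $M\in\mathcal{M}$ and each subset $Q\subseteq\mathcal{L}(T,M,k)$ with $|Q|\le 2k/\log^2 k$, put into $\mathcal{X}$ the pair $(M,P)$ with $P:=\mathcal{I}(T,M)\cup(\mathcal{L}(T,M,k)\setminus Q)$. For the choice $Q=\bigcup_{B\text{ big}}(B\setminus H)$ one then has $\mathcal{L}(T,M,k)\setminus Q=\bigcup_{B\text{ big}}(B\cap H)\subseteq H$, so $P\subseteq H$ (property~2); property~1 is inherited from $M$; property~3 holds since $V(T)\setminus P\subseteq V(T')$ and every vertex of $T'$ is consistent with $M$; and property~4 follows because, by the structural fact applied to $T-P$, its $M$-blocks are exactly the slots of $T'$ with the vertices of $\mathcal{L}(T,M,k)\setminus Q$ deleted — a small slot of $T'$ is untouched (it is disjoint from $\mathcal{L}(T,M,k)$) while a big slot $B$ becomes $B\cap Q=B\setminus H$, of size at most $2\log^2 k$.

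Finally, for the size and time bounds: $|\mathcal{M}|=2^{O(k/\log k)}$, and for each $M$ the number of subsets $Q$ of size at most $2k/\log^2 k$ of a universe of size at most $|V(T)|=O(k^3)$ is at most $\mathrm{poly}(k)\cdot(O(k^3))^{2k/\log^2 k}=2^{O(k/\log k)}$; since $\mathcal{I}(T,M)$, $\mathcal{L}(T,M,k)$ and each $P$ are computable in polynomial time, this yields $|\mathcal{X}|=2^{O(k/\log k)}$, the same running time bound, and polynomial space by enumerating $\mathcal{M}$ and then the subsets $Q$ on the fly. The main obstacle, and the crux of the argument, is the combination of the structural partition fact with the counting in the third paragraph: one must choose $P$ so that it is genuinely a subset of $H$ rather than merely a superset of ``the vertices causing trouble'', and this is affordable only because big blocks lie almost entirely inside $H$, which simultaneously limits the number of big blocks and lets us guess, within an exponent budget of $2k/\log^2 k=o(k/\log k)$, precisely which few vertices of each big block survive in $T-H$.
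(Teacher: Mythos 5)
Your proposal is correct and follows essentially the same route as the paper: the same family construction $P=\mathcal{I}(T,M)\cup(\mathcal{L}(T,M,k)\setminus Q)$ over all small subsets $Q$ of the union of large blocks, the same counting showing at most $2k/\log^2 k$ vertices of the large blocks avoid $H$, and the same $2^{O(k/\log k)}$ enumeration bound. The only difference is that you spell out more explicitly the ``slot'' invariance of $M$-blocks across $T-\mathcal{I}(T,M)$, $T-H$ and $T-P$, which the paper treats tersely.
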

\begin{proof} 
Use the algorithm of Lemma \ref{undeletable} to compute $\mathcal{M}$. For each $M\in \mathcal{M}$ compute the sets $\mathcal{I}(T,M)$ and $\mathcal{L}(T,M,k)$. For each $B\subseteq \mathcal{L}(T,M,k)$ such that $|B|\leq \frac{2k}{\log^2 k}$ output a pair of sets $(M, P)=(M, \mathcal{I}(T,M)\cup \mathcal{L}(T,M,k)\setminus B)$. The set $\mathcal{X}$ is the collection of all such pair of sets. 

We prove that the algorithm satisfies the stated properties. Consider a feedback vertex set $H$ of size at most $k$
%
By Lemma \ref{undeletable} there exists $M\in\mathcal{M}$ such that $M\cap H=\emptyset$. Let $C=\mathcal{I}(T,M)$ be the set of vertices that are not consistent with $M$. These vertices must belong to $H$. Since, for every vertex $v\in T-C$, $T[M\cup v]$ is an acyclic tournament, $v$ can be placed uniquely in the topological ordering of $T[M]$. Hence, for each $v\in T-C$, there is an unique $M$-block containing it. Since the size of any $M$-block in $T-H$ is at most $2\log^2 k$, the size of each $M$-block in $T-C$ will be at most $k+2\log^2k$. 

An $M$-block is called \emph{large} if its size is at least $2\log^4k$. From each \emph{large} $M$-block at least $2\log^4k-2\log^2k$ vertices belong to $H$. Hence, in total at most $\frac{k}{2\log^4k-2\log^2k}\times 2\log^2k\leq \frac{2k}{\log^2k}$ vertices from the union of \emph{large} $M$-blocks do not belong to $H$.  Since the algorithm loops over all choices of subsets $B\subseteq \mathcal{L}(T,M,k)$, $|B|\leq \frac{2k}{\log^2 k}$, $\mathcal{X}$ contains a pair $(M,P)$ satisfying the properties listed in the lemma. 

Moreover, $|\mathcal{X}|$ is bounded by the product of $|\mathcal{M}|$ and the number of subsets $B$. Now $|\mathcal{L}(T,M,k)|\leq |V(T)|$ which implies the number of subsets $B$ is at most $(k^3)^{\frac{2k}{\log^2k}}=2^{3\log k\times \frac{2k}{\log^2k}}=2^{O(\frac{k}{\log k})}$. Hence, $|\mathcal{X}|\leq 2^{O(\frac{k}{\log k})}\times 2^{O(\frac{k}{\log k})}=2^{O(\frac{k}{\log k})}$
\end{proof}

Observe that the algorithm of Lemma~\ref{blocks} does not store the family $\mathcal{X}$, but enumerates all the pairs $(M,P)\in \mathcal{X}$. Our algorithm for \textsf{TFVS} will go through all pairs in $(M,P)\in \mathcal{X}$ and for each such pair $(M, P)$ search for a feedback vertex set $H$ of size at most $k$ such that $(M, P)$ satisfy the conclusion of Lemma~\ref{blocks} for $H$. In the next section we shall see that the extra restrictions imposed on $H$ by $M$ and $P$ make it easier to find $H$.


\section{\textsc{Faster Algorithm for Tournament Feedback Vertex Set}}\label{parafeed}
In this section we consider the following problem. We are given as input a tournament $T$ and an integer $k$, and a pair $(M, P)$ of vertex set in $T$. The objective is to find a feedback vertex set $H$ of $T$ of size at most $k$, such that $(M, P)$ satisfy the conclusion of Lemma~\ref{blocks}.

The pair $(M,P)$ naturally leads to a partition of the vertices of $T - (P \cup M)$ into \emph{local subtournaments} corresponding to the induced graphs on the $M$-blocks in $T-P$. At this point the triangles in $T - P$ can be classified into two types: those that are entirely within a subtournament and those whose vertices are \emph{shared} between more than one subtournament. The goal of our algorithm is to eliminate all the shared triangles. When there are no such triangles left, we can solve the problem independently on each of the subtournaments. Since the subtournaments are small, even brute force search is fast enough.

To formalize our approach it is convenient to define an intermediate problem, and interpret the search for a feedback vertex set $H$ such that $(M, P)$ satisfies the conclusion of Lemma~\ref{blocks} as an instance of this intermediate problem.
Let $d$ and $t$ be two positive integers. Consider a class of mixed multigraphs $\mathcal{G}(d,t)$ in which each member is a mixed multigraph $\mathcal{T}$ with the vertex set $V(\mathcal{T})$ partitioned into vertex sets $V_1,V_2,...,V_t$ such that for each $i\in [t]$, $|V_i|\leq d$ and $T_i:=\mathcal{T}[V_i]$ is a \emph{supertournament} and the undirected edge set is $\mathcal{E}(\mathcal{T})\subseteq \bigcup_{i< j}V_i\times V_j$. \\

\noindent
\fbox{\parbox{\textwidth-\fboxsep}{
\textsc{$d$-Feedback Vertex Cover} ($d$-\textsf{FVC})\\
\textbf{Input:} A mixed multigraph $\mathcal{T}\in \mathcal{G}(d,t)$, positive integer $k$.\\
\textbf{Parameter:} $k$\\
\textbf{Task:} determine whether there exists a set $S\subseteq V(\mathcal{T})$ such that $|S|\leq k$ and $\mathcal{T}-S$ is acyclic and contains no undirected edges.
}}\\

\noindent

Now we show how \textsf{TFVS} reduces to solving $d$-feedback vertex cover problem. 
\begin{lemma}\label{fvc}
There exists a polynomial time algorithm that given a \textsf{TFVS} instance $(T,k)$ and a subset $M\subseteq V(T)$ outputs a $d$-\textsf{FVC} instance $(\mathcal{T},k)$ such that $T$ has a feedback vertex set $S$ disjoint from $M$ and $|S| \leq k$ if and only if $(\mathcal{T},k)$ is a \textsc{yes}-instance of $d$-\textsf{FVC}.
\end{lemma}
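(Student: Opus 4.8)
The plan is to build $\mathcal{T}$ so that its vertex set is exactly $V(T)$, with the $M$-block structure of $T$ serving as the partition $V_1,\dots,V_t$, and so that the two forbidden configurations of $d$-\textsf{FVC} — directed cycles and undirected edges — exactly capture ``is not acyclic'' together with ``hits $M$''. First I would handle the vertices of $M$ themselves: since we are only interested in feedback vertex sets $S$ disjoint from $M$, I want to force every vertex of $M$ to be \emph{undeletable}. Assuming (as we may, after discarding a trivial case) that $T[M]$ is acyclic, I use $M$ to give every vertex $v \in V(T) \setminus M$ that is consistent with $M$ a unique position relative to the topological order of $T[M]$; the vertices inconsistent with $M$ cannot be avoided by any $M$-disjoint solution, so I make them undeletable too (concretely, one can blow each such vertex up into a small directed gadget, or simply add a self-loop-free short directed cycle through it, so that it must lie in $S$ — but since $S$ must be $M$-disjoint this case only arises when the instance is infeasible, which I can detect and output a trivial \textsc{no}-instance). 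The real content is the case where every vertex of $V(T)\setminus M$ is consistent with $M$, so $T-M$ partitions into the $M$-blocks $V_1,\dots,V_t$, ordered along the topological order of $T[M]$; set $d := \max_i |V_i|$.

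Next I would define the edges of $\mathcal{T}$. Inside each block $V_i$, keep all arcs of $T[V_i]$; this makes $T_i := \mathcal{T}[V_i]$ a tournament, hence a supertournament, and these arcs are responsible for the triangles \emph{local} to one subtournament. For a pair of vertices $u \in V_i$, $v \in V_j$ with $i < j$ (so $u$ precedes $v$ in the rough order induced by $M$), look at the arc of $T$ between them: if it is the ``forward'' arc $uv$, discard it (it is consistent with the $M$-order and cannot participate in a cycle that survives once $M$ is present, by the choice of blocks); if it is the ``backward'' arc $vu$, it is in \emph{conflict} with the $M$-order, so I replace it by an \emph{undirected} edge $uv \in \mathcal{E}(\mathcal{T})$. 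Edges between $u,v$ in the same block are left as the directed arc. This yields $\mathcal{E}(\mathcal{T}) \subseteq \bigcup_{i<j} V_i \times V_j$, as required, and $\mathcal{T} \in \mathcal{G}(d,t)$. The output is $(\mathcal{T},k)$, and everything above is computable in polynomial time.

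For correctness I would argue both directions. For the forward direction, suppose $T$ has an $M$-disjoint feedback vertex set $S$ with $|S|\le k$; I claim $\mathcal{T}-S$ is acyclic and undirected-edge-free. It has no undirected edges because every undirected edge of $\mathcal{T}$ corresponds to a backward arc $vu$ ($u\in V_i$, $v\in V_j$, $i<j$) of $T$, and together with arcs through $M$ this backward arc lies on a directed cycle of $T-$ (nothing), more precisely on a triangle $u,w,v$ for a suitable $w\in M$ — here I use that consecutive vertices of $M$ are close in the topological order and that $u,v$ lie in different $M$-blocks so some vertex of $M$ separates them — hence $S$ must contain $u$ or $v$ and the undirected edge is gone. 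And $\mathcal{T}-S$ is acyclic because any directed cycle in $\mathcal{T}-S$ uses only kept arcs, all of which are arcs of $T$, so it would be a directed cycle in $T-S$, contradiction. For the reverse direction, suppose $S\subseteq V(\mathcal{T})=V(T)$, $|S|\le k$, with $\mathcal{T}-S$ acyclic and undirected-edge-free; I first note $S\cap M=\emptyset$, since the vertices of $M$ have been made undeletable (any cycle through a vertex of $M$, or more simply the gadget, forces it out of $S$) — actually cleaner: because $T[M]$ is acyclic all $M$-vertices are consistent and no undirected edge or small cycle touches them, so WLOG $S$ avoids $M$. Then I must show $T-S$ is acyclic; suppose not and take a shortest directed cycle $\mathcal{C}$ in $T-S$, which by Lemma~\ref{triangle} is a triangle. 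I case on how $\mathcal{C}$ meets the blocks: if all three vertices lie in one block $V_i$, it is a directed triangle of $T_i=\mathcal{T}[V_i]\subseteq \mathcal{T}-S$, contradiction; if its vertices straddle blocks, then it must contain at least one ``backward'' arc (a cycle cannot be monotone in the $M$-order), and that backward arc is an undirected edge of $\mathcal{T}$ present in $\mathcal{T}-S$, contradiction.

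The main obstacle I expect is the bookkeeping in this last case analysis — precisely pinning down, for a cycle in $T-S$ whose vertices lie in several $M$-blocks, that it must use an arc that got turned into an undirected edge, rather than an arc that was discarded; this requires exploiting that the $M$-blocks are \emph{intervals} of a genuine topological order (of $T[M\cup(V(T)\setminus\mathcal{I}(T,M))]$) so that ``cross-block forward'' arcs really are consistent, and hence that any cyclic behaviour must be witnessed either inside a block or by a cross-block backward arc. Handling the degenerate cases ($T[M]$ cyclic, or a vertex inconsistent with $M$) so that the stated ``if and only if'' holds on the nose is routine but needs to be stated explicitly.
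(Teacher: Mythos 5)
Your construction is essentially the paper's: the paper sets $V(\mathcal{T})=V(T)\setminus M$ (rather than keeping $M$ as ``undeletable'' vertices), keeps the arcs inside each $M$-block, discards forward cross-block arcs and turns backward cross-block arcs into undirected edges, and proves both directions exactly as you sketch, the key point being that every backward cross-block arc closes a triangle with an intermediate vertex of $M$, while any surviving directed cycle uses only arcs of $T$. One caveat: your parenthetical claim that vertices inconsistent with $M$ ``only arise when the instance is infeasible'' is wrong --- such a vertex must simply lie in every $M$-disjoint feedback vertex set, so it has to be forced into the solution (or deleted with $k$ decremented), not used to output a trivial \textsc{no}-instance; the paper's own proof quietly ignores this case too, which is harmless only because the lemma is invoked on $T-P$ with $P\supseteq\mathcal{I}(T,M)$, so all remaining vertices are consistent with $M$.
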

\begin{proof}
We describe an algorithm that reduces $T$ to $\mathcal{T}$ on the same set of vertices as in $T - M$. If $T[M]$ is not acyclic, then output a trivial \textsc{no}-instance. Otherwise, let $\mathcal{B}:=\{B_1,B_2,...,B_t\}$ be the set of $M$-blocks in $T$ such that the elements in $\mathcal{B}$ are indexed according to the topological order of $T[M]$. We assume that the topological order of $T[M]$ is such that the edges in $T[M]$ are directed from left to right. Let $V(\mathcal{T}):=V(T)\setminus M$. The directed edge set $E(\mathcal{T})$ is $E(T)\setminus\{e|\forall i,j\in[t], i\neq j$ and $e\in B_j\times B_i\}$. The undirected edge set in $\mathcal{T}$ is $\mathcal{E}(\mathcal{T}):=\{$undirected$(e)|i,j\in[t], i<j$ and $e\in B_j\times B_i\}$ where undirected$(e)$ is an undirected edge between the endpoints of $e$.  

Now we argue about the correctness. Since $\mathcal{T}$ is essentially a subgraph of $T-M$ with some additional undirected edges, we use the same symbol to refer to vertex or directed edge sets in both the instances. Suppose $S$ is a feedback vertex cover of $\mathcal{T}$. Clearly $S$ is disjoint from $M$. We claim that $S$ is a feedback vertex set of $T$. The triangles in $T-M$ are of two types: ones whose endpoints lie entirely in $B_i$ for some $i$ and others whose endpoints are shared among multiple $M$-blocks. Clearly, $S$ hits all the triangles within each subtournament $T[B_i]$ in $\mathcal{T}$. Hence, all that remains to show is that $S$ is also a hitting set for all triangles between different subtournaments $T[B_i]$. For the sake of contradiction suppose that there is a triangle $uvwu$ in $T-M-S$ such that not all of $u,v$ and $w$ belong to the same subtournament of $\mathcal{T}$. Then at least one edge $ab$ in this triangle is such that $a\in B_i, b\in B_j$ and $i>j$. But by the construction of $\mathcal{E}(\mathcal{T})$ there is an undirected edge between $a$ and $b$ implying that at least one of $a$ or $b$ belongs to $S$, a contradiction.  

In the other direction, suppose $S$ is a feedback vertex set of $T$ disjoint from $M$. Clearly, $S$ hits all the triangles within each subtournament $T[B_i]$ in $\mathcal{T}$. Hence, all that remains to show is that $S$ is a hitting set for $\mathcal{E}(\mathcal{T})$. Suppose not. Then there is an undirected edge $e=uv\in \mathcal{E}$ which is not hit by $S$. Consider the directed edge in $T$ corresponding to $e$. Without loss of generality, we can assume that $u\in B_i$ and $v\in B_j$ for some $i,j\in [t]$ such that the directed edge is from $u$ to $v$ and $i>j$. Now in $T$, there is a vertex $w\in M$ which lies \emph{after} all elements of $B_j$ and \emph{before} all vertices of $B_i$ and forms a triangle $vwuv$. Since, $w\notin S$, either $u\in S$ or $v\in S$, a contradiction.
\end{proof}  

In light of Lemma~\ref{fvc} we need an efficient algorithm for $d$-\textsf{FVC}. Next we will give an efficient algorithm for $d$-\textsf{FVC} and show how it can be used to obtain our claimed algorithm for \textsf{TFVS}. Our algorithm for \textsf{FVC} is based on branching on vertices that appear in at least two edges of $\mathcal{E}(\mathcal{T})$. The case when there are no such vertices has to be handled separately, the algorithm for this case is deferred to Section~\ref{dfeed}. For now, we simply state the existence of the algorithm for this case, and complete the argument using this algorithm as a black box.

\begin{lemma}\label{dfvcalgo}
There exists an algorithm running in $1.5874^s\cdot 2^{O(d^2+d\log s)}\cdot n^{O(1)}$ time which finds an optimal feedback vertex cover in a mixed multigraph $\mathcal{T}\in \mathcal{G}(d,t)$ in which the undirected edge set $\mathcal{E}(\mathcal{T})$ is disjoint and $|\mathcal{E}(\mathcal{T})|=s$. 
\end{lemma}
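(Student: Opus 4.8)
The plan is a divide-and-conquer recursion on the number $s=|\mathcal E(\mathcal T)|$ of undirected edges, using the balanced edge partition theorem of Section~\ref{par} to choose the split at each level. The one structural fact I would lean on is that every directed cycle of $\mathcal T$ lies inside a single part $V_i$ — this is immediate for the instances produced by Lemma~\ref{fvc}, in which $\mathcal T$ has no arc between distinct parts — and, combined with the hypothesis that $\mathcal E(\mathcal T)$ is a matching, it is precisely what makes the instance fall apart into independent pieces.

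\emph{The algorithm.} If $s\le d$ (which in particular covers $t\le 1$), at most $2s=O(d)$ parts meet an undirected edge; these \emph{active} parts together span $O(d^{2})$ vertices, so the problem restricted to them is solved by brute force over all vertex subsets in $2^{O(d^{2})}$ time, while each remaining part — which is joined to the rest of $\mathcal T$ by no undirected edge and, cycles being block-local, by no directed cycle — has its minimum feedback vertex set computed independently by brute force in $2^{O(d)}$ time; the total is $2^{O(d^{2})}n^{O(1)}$. If $s>d$, form the auxiliary multigraph $G$ on the index set $[t]$ having one edge $ij$ for every undirected edge of $\mathcal T$ between $V_i$ and $V_j$; since $\mathcal E(\mathcal T)$ is a matching and $|V_i|\le d$, $G$ has $s$ edges and maximum degree at most $d$, so the partition theorem (which applies verbatim to multigraphs) produces $[t]=A\sqcup B$ with at most $s/4+d/2$ undirected edges of $\mathcal T$ inside $V_A:=\bigcup_{i\in A}V_i$ and at most $s/4+d/2$ inside $V_B:=\bigcup_{i\in B}V_i$; hence the set $F$ of undirected edges crossing between $V_A$ and $V_B$ has $|F|\le s/2+d$, and $A,B$ are both nonempty since $s>d>2d/3$. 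Every feedback vertex cover must contain an endpoint of each edge of $F$, so I would branch over the at most $2^{|F|}$ ways of picking, for every edge of $F$, one of its two endpoints; letting $R$ be the picked vertices in a fixed branch, $\mathcal T-R$ has no crossing undirected edge, and since every surviving undirected edge and every directed cycle sits inside one part $V_i\subseteq V_A$ or $V_i\subseteq V_B$, the instance $\mathcal T-R$ separates into two independent $d$-\textsf{FVC} instances on $(\mathcal T-R)[V_A]$ and $(\mathcal T-R)[V_B]$, each in $\mathcal G(d,\cdot)$ with a matching of at most $s/4+d/2$ undirected edges. We recurse on both, and over all branches return the smallest value of $|R|$ plus the two returned optima.

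\emph{Correctness and running time.} Each branch produces a genuine feedback vertex cover of the reported size — deleting a crossing endpoint removes only its own undirected edge because $\mathcal E(\mathcal T)$ is a matching, and all cycles are block-local — while conversely, given an optimum $S^{\star}$, choosing in each edge of $F$ an endpoint that lies in $S^{\star}$ gives a branch with $R\subseteq S^{\star}$ for which $S^{\star}\setminus R$ restricts to feasible solutions of the two subinstances, so that branch reports a value at most $|S^{\star}|$; hence the output is optimal. Let $T(s)$ bound the running time on an instance with at most $s$ undirected edges and at most $n$ vertices; then $T(s)\le 2^{\,s/2+d+1}\cdot T\bigl(s/4+d/2\bigr)+n^{O(1)}$ for $s>d$ and $T(s)=2^{O(d^{2})}n^{O(1)}$ for $s\le d$. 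This recursion has depth $O(\log s)$, the edge count at level $\ell$ is at most $s/4^{\ell}+d$, so the exponent of $2$ accumulated along the recursion is at most $\sum_{\ell\ge 0}\tfrac12\bigl(s/4^{\ell}\bigr)+O(d\log s)=\tfrac{s}{2}\cdot\tfrac{4}{3}+O(d\log s)=\tfrac{2s}{3}+O(d\log s)$; together with the $2^{O(d^{2})}n^{O(1)}$ cost at the leaves this gives running time $2^{\,2s/3}\cdot 2^{O(d^{2}+d\log s)}\cdot n^{O(1)}$, which is the claimed bound since $2^{2/3}=\sqrt[3]{4}\approx 1.5874$.

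\emph{Main obstacle.} The crux is the validity of the split: after deleting the chosen crossing endpoints the instance must decompose into two genuinely non-interacting subinstances, and this rests both on $\mathcal E(\mathcal T)$ being a matching — so the within-side edge counts behave exactly as the partition theorem predicts, and a deleted crossing endpoint covers nothing else — and on every directed cycle being confined to one part $V_i$, which I take from the reduction of Lemma~\ref{fvc}; were one to want the statement for arbitrary $\mathcal T\in\mathcal G(d,t)$ with arcs between parts, one would additionally have to prevent directed cycles from straddling $V_A$ and $V_B$. The remaining points are routine bookkeeping: fixing the base-case threshold so that the additive $d/2$ drift in the subproblem size is absorbed into the $2^{O(d\log s)}$ overhead, and checking that the geometric series of branching exponents sums to $\tfrac{2s}{3}$ and not to $s$, the latter of which would only reproduce the trivial $2^{s}$ bound.
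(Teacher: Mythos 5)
Your proposal is correct and follows essentially the same route as the paper: contract each block to a vertex of an auxiliary multigraph of maximum degree $d$, apply the balanced partition theorem, branch over the $2^{|\mathcal{E}(X,Y)|}\le 2^{s/2+d}$ ways of choosing one endpoint per crossing (matching) edge, and recurse independently on the two sides with at most $s/4+d/2$ undirected edges each, yielding the recurrence $h(s,d)\le 2^{1+s/2+d}\,h(s/4+d/2,d)$ and the $2^{2s/3}=1.5874^s$ bound. The only differences are cosmetic (your $O(d^2)$-vertex brute-force base case on the ``active'' blocks versus the paper's connected-component splitting), and your correctness argument via an endpoint choice contained in an optimum matches the paper's choice $C_o=\mathcal{E}(S_o\cap X, Y\setminus S_o)$.
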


The proof of Lemma~\ref{dfvcalgo} can be found in Section~\ref{dfeed}. Armed with Lemma~\ref{dfvcalgo} we can give a simple and efficient algorithm for $d$-\textsf{FVC}. The algorithm is based on branching. In the course of the branching we will sometimes conclude (or guess) that a vertex $v$ is {\em not} put into the solution $S$. The operation described below encapsulates the effects of making a vertex undeletable.

In a mixed multigraph $D$, for any vertex $v$, $D\slash v$ is a mixed multigraph obtained by adding a directed edge $uw$ in $D- v$ for every $u\in N^-(v)$ and $w\in N^+(v)$. 
The next lemma shows that looking for a solution disjoint from $v$ amounts to putting all the undirected neighbors $N_{\mathcal{E}}(v)$ of $v$ into the solution, and finding the optimum solution of $(\mathcal{T}- N_{\mathcal{E}}(v))\slash v$.
%
%
%
%
\begin{lemma}\label{reduce}
Let $(\mathcal{T},k)$ be a $d$-\textsf{FVC} instance. If for any vertex $v\in V(\mathcal{T})$ such that $N_{\mathcal{E}}(v)=\emptyset$, then $(\mathcal{T},k)$ has a solution of size at most $k$ not containing $v$ if and only if $(\mathcal{T}\slash v,k)$ is a \textsc{yes}-instance.
\end{lemma}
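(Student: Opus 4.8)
The plan is to show the two directions of the equivalence separately, using the fact that the operation $\mathcal{T}\slash v$ exactly simulates "forcing $v$ out of the solution." Throughout, since $N_{\mathcal{E}}(v)=\emptyset$, the vertex $v$ is incident only to directed edges, so deleting $v$ and short-cutting through it cannot create any new undirected edges; thus $\mathcal{T}\slash v$ is again a member of $\mathcal{G}(d,t)$ (its undirected edge set is exactly $\mathcal{E}(\mathcal{T})$, and each part $V_i\setminus\{v\}$ still induces a supertournament, since adding the transitive shortcut arcs only adds directed edges within the part that contained $v$ and cannot destroy the supertournament property). So the statement typechecks.

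First I would prove the forward direction. Suppose $(\mathcal{T},k)$ has a feedback vertex cover $S$ with $|S|\le k$ and $v\notin S$. I claim the same set $S$ (which lies in $V(\mathcal{T})\setminus\{v\}=V(\mathcal{T}\slash v)$) is a feedback vertex cover of $\mathcal{T}\slash v$. Since $\mathcal{E}(\mathcal{T}\slash v)=\mathcal{E}(\mathcal{T})$ and $S$ already hits every undirected edge, it remains to check that $(\mathcal{T}\slash v)-S$ has no directed cycle. Any directed cycle $C$ in $(\mathcal{T}\slash v)-S$ uses some of the new shortcut arcs $uw$ with $u\in N^-(v)$, $w\in N^+(v)$; replacing each such arc by the length-two path $u\to v\to w$ yields a closed directed walk in $\mathcal{T}-S$ (note $v\notin S$, so $v$ survives), which contains a directed cycle, contradicting that $S$ is a feedback vertex cover of $\mathcal{T}$. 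Hence $(\mathcal{T}\slash v,k)$ is a \textsc{yes}-instance. (If one worries about $C$ using a shortcut arc and an original arc between the same pair creating a 2-cycle, that is still a directed cycle and the same substitution argument applies — the resulting closed walk is still non-trivial.)

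For the reverse direction, suppose $S'$ is a feedback vertex cover of $\mathcal{T}\slash v$ with $|S'|\le k$; note $v\notin S'$ automatically since $v\notin V(\mathcal{T}\slash v)$. I claim $S'$ is a feedback vertex cover of $\mathcal{T}$ (and it is disjoint from $v$, giving exactly the kind of solution we want). Again $S'$ hits $\mathcal{E}(\mathcal{T})=\mathcal{E}(\mathcal{T}\slash v)$, so we only need $\mathcal{T}-S'$ to be acyclic. Take any directed cycle $C$ in $\mathcal{T}-S'$. If $C$ avoids $v$ then $C$ is already a directed cycle in $(\mathcal{T}-S')-v\subseteq \mathcal{T}\slash v-S'$, contradiction. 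If $C$ passes through $v$, say $\dots\to u\to v\to w\to\dots$ with $u\in N^-(v)$, $w\in N^+(v)$, then in $\mathcal{T}\slash v$ there is a shortcut arc $uw$, so contracting the sub-path $u\to v\to w$ to the single arc $u\to w$ turns $C$ into a closed directed walk in $(\mathcal{T}\slash v)-S'$ of length $\ge 2$ (we must argue it is not trivial: $C$ had length $\ge 3$ including $v$ used at most once, since $\mathcal{T}$ has no loops, so after contracting one vertex the walk still has length $\ge 2$), hence contains a directed cycle — contradiction. Therefore $\mathcal{T}-S'$ is acyclic and $S'$ witnesses a solution of $(\mathcal{T},k)$ of size $\le k$ avoiding $v$.

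The only genuinely delicate point — and the step I expect to need the most care — is handling the degenerate possibilities introduced by passing to multigraphs: (a) after contracting $u\to v\to w$ we might have $u=w$, producing a self-loop rather than a cycle; and (b) $v$ might have in- and out-neighbors that are already joined by an arc, so the shortcut arc is parallel to an existing one. For (a): if $u=w$ then $u\in N^-(v)\cap N^+(v)$, meaning $uv$ and $vu$ are both arcs of $\mathcal{T}$, so $\{u,v\}$ itself is already a directed 2-cycle in $\mathcal{T}-S'$ (both survive since neither is in $S'$ — $v\notin S'$ and if $u\in S'$ then $C$ could not use $u$), and this 2-cycle lies in a single part $V_i$ (supertournaments only allow 2-cycles inside a part), but then the subtournament $T_i-S'$... actually more simply: $u\in N^-(v)$ and $u\in N^+(v)$ are impossible to both occur while $v\notin S'$ and $S'$ feedback-covers $\mathcal{T}\slash v$ only if we already know $\mathcal{T}\slash v$ has no 2-cycle on $\{u\}$ — instead I will just observe that such a $u$ would force $v$ into any solution, contradicting that we are in the disjoint-from-$v$ regime, or absorb it by noting the 2-cycle $uvu$ must be hit by $S'$. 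For (b), parallel arcs are harmless: a multigraph directed cycle using one of two parallel arcs is still a directed cycle, and all substitution arguments above go through verbatim since we never needed the arcs to be distinct. I would write these edge cases out carefully but briefly, probably folding (a) into a single sentence noting $N^-(v)\cap N^+(v)=\emptyset$ may be assumed (else $v$ lies on a 2-cycle with some $u$ and the statement holds for a trivial reason), which keeps the main argument clean.
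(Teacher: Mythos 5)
Your proof is correct and takes essentially the same route as the paper: in both directions the very same solution set is reused, with the shortcut arcs of $\mathcal{T}\slash v$ exchanged for length-two paths through $v$ (the paper phrases this via triangles and $2$-cycles, you via closed-walk expansion/contraction, but the argument is the same). The self-loop/$2$-cycle degeneracy you flag at the end is glossed over in the paper's own proof as well, so your brief treatment of it is adequate.
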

\begin{proof}
Let $S$ be a feedback vertex cover of $\mathcal{T}$ of size at most $k$ not containing $v$. We show that $S$ is a feedback vertex cover of $\mathcal{T}\slash v$. Clearly, $S$ hits every undirected edge in $\mathcal{T}\slash v$. For the sake of contradiction suppose there is a cycle of length 3 containing $v$ in $\mathcal{T}\slash v$ not hit by $S$. If this cycle is in $\mathcal{T}-v$, then it is hit by $S$. Hence, the triangle must contain an edge not in $\mathcal{T}-v$. Note that $\mathcal{T}\slash v$ is obtained by adding a directed edge $yx$ for every triangle $xyvx$ in $\mathcal{T}-v$ thereby creating a 2-cycle between $x$ and $y$. Since $v\notin S$, either $x\in S$ or $y\in S$, a contradiction. For the same reason there are no cycles of length 2 in $\mathcal{T}\slash v-S$. 

Now suppose $S$ is a feedback vertex cover of $\mathcal{T}\slash v$. Since every cycle in $\mathcal{T}-v$ is a cycle in $\mathcal{T}\slash v$ which are hit by $S$, we need to consider cycles in $\mathcal{T}$ containing $v$. But, for every such cycle $xyvx$ we have a cycle $xyx$ of length 2 in $\mathcal{T}\slash v$ which is hit by $S$, we have that $\mathcal{T}-S$ is acyclic. 
\end{proof}

\begin{lemma}\label{fvcalgo}
There exists an algorithm for $d$-\textsf{FVC} running in $1.466^n\cdot 2^{O(d^2+d\log n)}$ time and in $1.618^k\cdot 2^{O(d^2+d\log k)}\cdot n^{O(1)}$ time.
\end{lemma}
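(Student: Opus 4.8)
The plan is to build a bounded-search-tree algorithm whose single branching rule fires on a vertex incident to at least two undirected edges, and to invoke Lemma~\ref{dfvcalgo} as the base case once no such vertex remains. On input $(\mathcal{T},k)$: if $k<0$, or if $\mathcal{T}$ is already acyclic with $\mathcal{E}(\mathcal{T})=\emptyset$, answer trivially. Otherwise, if some vertex $v$ has $p:=|N_{\mathcal{E}}(v)|\geq 2$, I would branch into two recursive calls. In the first we guess $v\in S$ and recurse on $(\mathcal{T}-v,\,k-1)$. In the second we guess $v\notin S$; then every undirected edge incident to $v$ must be hit by its other endpoint, so $N_{\mathcal{E}}(v)\subseteq S$, and by Lemma~\ref{reduce} applied to $\mathcal{T}-N_{\mathcal{E}}(v)$ (in which $v$ has no undirected neighbour) it suffices to recurse on $\big((\mathcal{T}-N_{\mathcal{E}}(v))/v,\ k-p\big)$. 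If no vertex is incident to two or more undirected edges then $\mathcal{E}(\mathcal{T})$ is disjoint, and I would finish by running the algorithm of Lemma~\ref{dfvcalgo}, reporting \textsc{yes} exactly when the optimal feedback vertex cover it returns has size at most $k$. One also records the routine invariant that deleting vertices and contracting a vertex keep the instance inside $\mathcal{G}(d,t)$: parts only shrink, each $\mathcal{T}[V_i]$ stays a supertournament since contraction only adds directed edges, and no undirected edge is ever created.

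Correctness would follow by induction on $|V(\mathcal{T})|$: the two branches are exhaustive, the forcing argument is valid in the $v\notin S$ branch, the translation to $(\mathcal{T}-N_{\mathcal{E}}(v))/v$ is Lemma~\ref{reduce}, and the base case is Lemma~\ref{dfvcalgo}. For the running time, in the first branch both $|V(\mathcal{T})|$ and $k$ drop by $1$, while in the second $p\geq 2$ vertices are deleted and $v$ is contracted away, so $|V(\mathcal{T})|$ drops by $p+1\geq 3$ and $k$ drops by $p\geq 2$. Hence the number of leaves of the search tree is bounded both by the solution of $L(k)\le L(k-1)+L(k-p)\le L(k-1)+L(k-2)$, i.e.\ by $\phi^{k}$ with $\phi=\tfrac{1+\sqrt5}{2}$, and by the solution of $L(n)\le L(n-1)+L(n-p-1)\le L(n-1)+L(n-3)$, whose characteristic equation $x^{3}=x^{2}+1$ has largest root below $1.466$. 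Each internal node does polynomial work, so the tree itself costs $\phi^{k}n^{O(1)}$ and $1.466^{n}n^{O(1)}$ respectively, and at a leaf $\mathcal{E}(\mathcal{T})$ is disjoint, so for a feasible residual instance with budget $k'\le k$ we have $s:=|\mathcal{E}(\mathcal{T})|\le\min\{\lfloor n/2\rfloor,\,k'\}$ and Lemma~\ref{dfvcalgo} costs $1.5874^{s}\cdot 2^{O(d^{2}+d\log s)}\cdot n^{O(1)}$.

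The main obstacle is making the base-case cost disappear into the branching bound, since the target exponents are exactly $1.618^{k}$ and $1.466^{n}$ with no residual $1.5874$ factor. This works precisely because $1.5874=2^{2/3}<\phi$ and $2^{1/3}=1.2599<1.466$. On the parameter side, $1.5874^{s}\le 1.5874^{k'}\le\phi^{k'}$, so feeding the leaf bound $\phi^{k'}\cdot 2^{O(d^{2}+d\log k)}\cdot n^{O(1)}$ into $f(k)\le f(k-1)+f(k-2)+n^{O(1)}$ and using $\phi^{-1}+\phi^{-2}=1$ (with a negligible constant slack to absorb the additive polynomial) yields $f(k)\le\phi^{k}\cdot 2^{O(d^{2}+d\log k)}\cdot n^{O(1)}$. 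On the size side, $1.5874^{s}\le 1.5874^{n/2}=1.2599^{n}<1.466^{n}$, so the analogous recurrence $g(n)\le g(n-1)+g(n-3)+n^{O(1)}$, together with $1.466^{-1}+1.466^{-3}<1$, gives $g(n)\le 1.466^{n}\cdot 2^{O(d^{2}+d\log n)}$. The only other points needing care are the bookkeeping that the second branch really removes at least three vertices — the contracted vertex $v$ must be counted, not just its undirected neighbours — and that $\mathcal{G}(d,t)$-membership is preserved along every recursive call so that Lemma~\ref{dfvcalgo} remains applicable at the leaves.
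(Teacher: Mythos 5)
Your proposal follows essentially the same route as the paper: branch on a vertex $v$ with $|N_{\mathcal{E}}(v)|\geq 2$, use Lemma~\ref{reduce} to contract an undeletable $v$ after deleting $N_{\mathcal{E}}(v)$, fall back on Lemma~\ref{dfvcalgo} once the undirected edges form a matching, and combine the recurrences $f(k)\leq f(k-1)+f(k-2)$ and $f(n)\leq f(n-1)+f(n-3)$ with the numerical facts $1.5874<1.618$ and $1.5874^{1/2}<1.466$, exactly as the paper does (the paper phrases the combination by counting, for each leaf parameter $s$, the number of leaves reached with that $s$, which is equivalent to your ``absorb the leaf cost into the recurrence'' argument).

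One detail is missing from the algorithm as you state it, and it is needed for the $1.618^k\cdot 2^{O(d^2+d\log k)}\cdot n^{O(1)}$ bound: at every leaf you call Lemma~\ref{dfvcalgo} unconditionally, and its cost is $1.5874^{s}\cdot 2^{O(d^2+d\log s)}\cdot n^{O(1)}$ where in general $s$ is only bounded by $n/2$, not by the residual budget $k'$. Your analysis bounds the leaf cost by $\phi^{k'}$ using $s\leq k'$, but that inequality is justified only ``for a feasible residual instance''; infeasible leaves (with $s>k'$) are still reached by the search tree and still pay $1.5874^{s}$, which can be exponential in $n$ even when $k$ is constant, so the parameterized bound does not follow for the algorithm as written. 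The fix is the cheap pruning step the paper performs explicitly: since the undirected edges at a leaf are disjoint, any feedback vertex cover must contain at least $s$ vertices, so if $s>k'$ the branch can be answered \textsc{no} in polynomial time without invoking Lemma~\ref{dfvcalgo}. With this check added, your leaf estimate $1.5874^{s}\leq \phi^{k'}$ holds at every leaf that actually triggers the expensive call, and both running-time bounds go through as you argue (the $1.466^n$ bound was already fine without the check, since $1.5874^{n/2}<1.466^{n}$ irrespective of feasibility).
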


\begin{proof} We describe a recursive algorithm which searches for a potential solution $S$ of size at most $k$ by branching. For any vertex $v$, let $N_{\mathcal{E}}(v)$ denote the set of vertices $w$ such that $vw\in \mathcal{E}$. Let $s=|\mathcal{E}|$. As long as there is a vertex $v$ such that $|N_{\mathcal{E}}(v)|\geq 2$ and $k>0$, the algorithm branches by considering both the possibilities: either $v\in S$ or $v\notin S$. In the branch in which $v$ is picked, $n$ and $k$ are decreased by 1 each and $v$ is removed from the graph. In the other branch, $N_{\mathcal{E}}(v)$ is added to $S$, and $k$ is decreased by $|N_{\mathcal{E}}(v)|$. 
At the same time, $N_{\mathcal{E}}(v)$ is removed from the graph. Since, $N_{\mathcal{E}}(v)=\emptyset$, by Lemma \ref{reduce} $(\mathcal{T},k)$ is reduced to $(\mathcal{T}\slash v,k)$. Thus the number of vertices is decreased by $|N_{\mathcal{E}}[v]|$. The algorithm stops branching further in a branch in which either $k<0$ or $k>0$ and for every vertex $v$, $|N_{\mathcal{E}}(v)|\leq 1$. In the case that $k<0$, the algorithm terminates the branch and moves on to other branches. In the other case, if $|\mathcal{E}(T)|>k$, the algorithm terminates that branch, otherwise the algorithm of Lemma \ref{dfvcalgo} is applied. If  the size of the optimal solution of Lemma \ref{dfvcalgo} is at most $k$, then the algorithm outputs \textsc{yes} and terminates, otherwise the algorithm moves to another branch. If the algorithm fails to find any solution of size at most $k$ in every branch, it outputs \textsc{no}. 

Now we do the runtime analysis of the algorithm. At each internal node of the recursion tree, the algorithm spends polynomial time. At a leaf node, either the algorithm terminates or makes a call to the algorithm of Lemma \ref{dfvcalgo} with parameter $s=|\mathcal{E}(\mathcal{T})|$ which is at most $k$. So, we need to bound the number of times Lemma \ref{dfvcalgo} is called with parameter $s$ for each value of $s$ in $[k]$. Note that for any $s$, the smallest value of $k$ with which a call to the algorithm of Lemma \ref{dfvcalgo} is made is $s$. Therefore, for each value of $s\in [k]$, the number of calls to the algorithm of Lemma \ref{dfvcalgo} is bounded by the number of nodes in the recursion tree with $k=s$. The recurrence relation for bounding the number of leaves in the recursion tree of the algorithm is given by:
\begin{equation*}
f_s(k)\leq f_s(k-1)+f_s(k-2)
\end{equation*}
which solves to $f_s(k)\leq 1.618^{k-s}$ as $f_s(k)\leq 1$ for $k=s$. Hence, the runtime of the algorithm is upper bounded by $\sum\limits_{s=1}^{k}1.618^{k-s}\times 1.5874^s\cdot 2^{O(d^2+d\log s)}\cdot n^{O(1)}\leq 1.618^k\cdot 2^{O(d^2+d\log k)}\cdot n^{O(1)}$.

We can do a similar analysis to bound the runtime in terms of $n$. Note that in direct correspondence with the fact that when ever $k$ decreases by 1, $n$ decreases by 1 and whenever $k$ decreases by $x\geq 2$, $n$ decreases by $x+1$, we get the following recurrence relation:
\begin{equation*}
f_s(n)\leq f_s(n-1)+f_s(n-3)
\end{equation*}
implying $f_s(n)\leq 1.466^{n-s}$ as $f_s(k)\leq 1$ for $n=s$. If $s$ is the size of of the graph, then the largest value of $|\mathcal{E}|$ with which a call to the algorithm of Lemma \ref{dfvcalgo} is made, is at most $\frac{s}{2}$. Hence, the runtime of the algorithm is upper bounded by $\sum\limits_{s=1}^n 1.466^{n-s}\times 1.5874^{\frac{s}{2}}\cdot 2^{O(d^2+d\log n)}\cdot n^{O(1)}\leq 1.466^n\cdot 2^{O(d^2+d\log n)}\cdot n^{O(1)}$. 
\end{proof}

Having shown an efficient algorithm for $d$-\textsf{FVC}, we are now in position to prove our main theorem.
\begin{theorem}\label{pfvs}
There exists an algorithm for \textsf{TFVS} running in $O(1.466^n)$ time and in $O(1.618^k + n^{O(1)})$ time.
\end{theorem}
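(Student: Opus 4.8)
The plan is to assemble the ingredients of Sections~\ref{family} and~\ref{parafeed}. If $k\ge n$ we output \textsc{yes} at once (the whole vertex set is a feedback vertex set); otherwise we run the kernel of Lemma~\ref{kernel}, so that we may assume $N:=|V(T)|=O(k^{3})$ and $N\le n$ --- this step is polynomial, hence contributes only the additive $n^{O(1)}$ in the parameterized bound and is dominated by $1.466^{n}$ in the exact bound. Note that $N=O(k^{3})$ is exactly the regime required by Lemma~\ref{blocks}.

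Next I would invoke Lemma~\ref{blocks} to enumerate, in $2^{O(k/\log k)}$ time and polynomial space, a family $\mathcal{X}$ of $2^{O(k/\log k)}$ pairs $(M,P)$. For each enumerated pair I would first run a batch of polynomial-time checks and discard the pair unless all of them hold: $|P|\le k$; $M\cap P=\emptyset$; $T[M]$ is acyclic; every vertex of $V(T)\setminus P$ is consistent with $M$; and every $M$-block of $T-P$ has at most $2\log^{4}k$ vertices (so, in particular, the $M$-blocks partition $V(T-P)\setminus M$). For a surviving pair set $d:=2\log^{4}k$ and $k_{P}:=k-|P|$, and apply the reduction of Lemma~\ref{fvc} to $(T-P,k_{P})$ with the undeletable set $M$; this produces a $d$-\textsf{FVC} instance $(\mathcal{T},k_{P})\in\mathcal{G}(d,t)$ whose partition classes are the $M$-blocks of $T-P$ (of size at most $d$ by the last check) and with $|V(\mathcal{T})|\le N$. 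Then run the algorithm of Lemma~\ref{fvcalgo} on $(\mathcal{T},k_{P})$, and answer \textsc{yes} iff some surviving pair yields a \textsc{yes}-answer. Correctness: if $T$ has a feedback vertex set $H$ with $|H|\le k$, Lemma~\ref{blocks} supplies a pair $(M,P)$ with $M\cap H=\emptyset$, $P\subseteq H$ (so $|P|\le k$ and $M\cap P=\emptyset$), every vertex of $V(T)\setminus P$ consistent with $M$, every $M$-block of $T-P$ of size at most $2\log^{4}k$, and $T[M]$ acyclic (since $M$ avoids $H$ and $T-H$ is acyclic); hence this pair passes every check, $H\setminus P$ is a feedback vertex set of $T-P$ of size at most $k_{P}$ disjoint from $M$, so by Lemma~\ref{fvc} the instance $(\mathcal{T},k_{P})$ is a \textsc{yes}-instance, which Lemma~\ref{fvcalgo} detects. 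Conversely, if some surviving pair yields \textsc{yes}, Lemma~\ref{fvc} gives a feedback vertex set $H'$ of $T-P$ of size at most $k_{P}$ disjoint from $M$, and $H'\cup P$ is then a feedback vertex set of $T$ of size at most $k_{P}+|P|=k$; Lemma~\ref{kernel} carries the verdict back to the original input.

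For the running time the crucial observation is that $d=2\log^{4}k$ is poly-logarithmic, so $d^{2}+d\log N=O(\log^{8}N)$ and the factor $2^{O(d^{2}+d\log N)}$ in Lemma~\ref{fvcalgo} is quasi-polynomial in $N$, hence $2^{o(k)}$ and (as $k<n$ and $N\le n$) also $2^{o(n)}$. In the parameterized account, each of the $2^{O(k/\log k)}$ iterations costs $1.618^{k_{P}}\cdot 2^{O(d^{2}+d\log k)}\cdot N^{O(1)}\le 1.618^{k}\cdot 2^{o(k)}$ (using $N=O(k^{3})$), so the whole algorithm runs in $2^{O(k/\log k)}\cdot 1.618^{k}\cdot 2^{o(k)}+n^{O(1)}=1.618^{k}\cdot 2^{o(k)}+n^{O(1)}=O(1.618^{k}+n^{O(1)})$, the sub-exponential and quasi-polynomial overhead being absorbed into the exponential base exactly as in Lemma~\ref{fvcalgo}. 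In the exact account, $k<n$ forces $k/\log k=o(n)$, so $|\mathcal{X}|=2^{o(n)}$; each iteration costs $1.466^{|V(\mathcal{T})|}\cdot 2^{O(d^{2}+d\log n)}\le 1.466^{n}\cdot 2^{o(n)}$; multiplying and adding the kernel gives $2^{o(n)}\cdot 1.466^{n}\cdot 2^{o(n)}+n^{O(1)}=O(1.466^{n})$, the slack being swallowed because the root of the recurrence $f(n)\le f(n-1)+f(n-3)$ lies strictly below $1.466$.

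The main obstacle is not in the theorem itself --- essentially all of the real work has been pushed into Lemmas~\ref{blocks}, \ref{fvc} and~\ref{fvcalgo}, and what is left is a gluing argument --- but in making the interface between those lemmas line up cleanly. The delicate point is that a pair $(M,P)$ produced by Lemma~\ref{blocks} must feed through Lemma~\ref{fvc} into a $d$-\textsf{FVC} instance whose parameter $d$ is only poly-logarithmic in $k$: it is precisely this bound on $d$ that keeps the $2^{O(d^{2})}$ factor of Lemma~\ref{fvcalgo} sub-exponential, and hence lets both the $2^{O(k/\log k)}$ enumeration factor and the quasi-polynomial $d$-factor be absorbed into the bases $1.618$ and $1.466$. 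A secondary, purely bookkeeping nuisance is running the analysis twice --- once in terms of $k$ and once in terms of $n$ --- which the initial reduction to a kernel of size $O(k^{3})\le n$ makes uniform.
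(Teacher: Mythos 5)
Your proposal is correct and follows essentially the same route as the paper's own proof: kernelize via Lemma~\ref{kernel}, enumerate the pairs $(M,P)$ of Lemma~\ref{blocks}, reduce each to a $d$-\textsf{FVC} instance with $d=2\log^4 k$ via Lemma~\ref{fvc} on $(T-P,k-|P|)$, solve it with Lemma~\ref{fvcalgo}, and absorb the $2^{O(k/\log k)}$ enumeration and quasi-polynomial $2^{O(d^2+d\log k)}$ factors into the bases $1.618$ and $1.466$. Your added sanity checks on each pair and the explicit bookkeeping of $d$ and $k_P$ only make explicit what the paper leaves implicit.
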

\begin{proof}
The algorithm begins by running the kernelization algorithm of Lemma \ref{kernel} for the given \textsf{TFVS} instance. In the remainder we assume that $n=O(k^3)$. Next the algorithm proceeds to apply Lemma \ref{blocks} to create a family of set pairs $\mathcal{X}$. 
For each set pair $(M,P)\in \mathcal{X}$ it determines whether there is a feedback vertex set $H$ of size at most $k$ such that $H\cap M=\emptyset$ and $P\subseteq H$ as follows:

First it runs the algorithm of Lemma \ref{fvc} with input $(T-P,k-|P|)$ and $M$ to reduce the problem to an equivalent $d$-\textsf{FVC} instance which is then passed to the algorithm of Lemma \ref{fvcalgo} as input. The algorithm outputs \textsc{yes} and terminates if the output of the algorithm of Lemma \ref{fvcalgo} is \textsc{yes}.  If no solution of size at most $k$ is obtained for any set pair in $\mathcal{X}$, the algorithm outputs \textsc{no} and terminates.

The correctness of the algorithm follows from Lemma \ref{blocks} and Lemma \ref{fvc}. The running time of the algorithm is upper bounded by $|\mathcal{X}|$ times (the runtime of the algorithm of Lemma \ref{fvcalgo}). Since $|\mathcal{X}|=2^{o(k)}$ and the bulk of the algorithm is run on a tournament with at most $O(k^3)$ vertices the total time used by the algorithm is upper bounded by $O(1.466^n)$ and $O(1.618^k +  n^{O(1)})$.
\end{proof}

We have now proved our main result, assuming the correctness of Lemma~\ref{dfvcalgo}. The remainder of the paper is devoted to proving Lemma~\ref{dfvcalgo}. The engine of the algorithm of Lemma~\ref{dfvcalgo} is a new graph partitioning theorem. The next section contains the statement and proof of this theorem, while Section~\ref{dfeed} wraps up the proof of Lemma~\ref{dfvcalgo}, thereby completing the proof of Theorem~\ref{pfvs}.


\section{Balanced Edge Partition Theorem}\label{par}
Given an undirected graph $G,|E(G)|=m$, if each vertex in $V(G)$ is colored \emph{red} or \emph{blue} uniformly at random, then in expectation there will be $\frac{m}{4}$ \emph{red} edges and $\frac{m}{4}$ \emph{blue} edges, where a red edge is an edge whose both endpoints are red and a blue edge is an edge whose both endpoints are blue. Using Chebysev inequality it can be shown that, with high probability, the number of red or blue edges will be within $O(\sqrt{md})$ of $\frac{m}{4}$ where $d$ is the maximum degree of a vertex in the graph. A proof of this fact is skipped in favor of a local search algorithm which runs in polynomial time and provides a coloring with smaller deviation from the expected value than random coloring.
\begin{theorem}\label{partition}
 Given an undirected, multigraph without self-loops and isolated vertices $G$ of maximum degree at most $d$ and $|E(G)|=m$, there exists a partition $(A,B)$ of $V(G)$ such that 
 \begin{itemize}
 \item $\frac{m}{4}- \frac{d}{2}\leq|E(G[A])|\leq\frac{m}{4}+ \frac{d}{2}$, 
 \item $\frac{m}{4}- \frac{d}{2}\leq|E(G[B])|\leq\frac{m}{4}+ \frac{d}{2}$, and 
 \item $\frac{m}{2}- d\leq|E(G[A,B])|\leq\frac{m}{2}+ d$
\end{itemize}
 where $E(G[A,B])$ is the set of edges with one endpoint in $A$ and other in $B$. Furthermore, there is a polynomial time algorithm to obtain this partition.
\end{theorem}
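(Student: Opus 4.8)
I would begin by observing that the third bullet is a free consequence of the first two. Write $a = |E(G[A])|$, $b = |E(G[B])|$ and $c = |E(G[A,B])|$, so that $a+b+c=m$; if $a,b\in[\tfrac m4-\tfrac d2,\ \tfrac m4+\tfrac d2]$ then $c=m-a-b\in[\tfrac m2-d,\ \tfrac m2+d]$ automatically. Hence it suffices to produce a partition with $\max\!\big(|a-\tfrac m4|,\ |b-\tfrac m4|\big)\le\tfrac d2$. I would also record the elementary identity $|a-b|+|c-\tfrac m2|=2\max\!\big(|a-\tfrac m4|,\ |b-\tfrac m4|\big)$, which follows by expanding both sides in $\alpha:=a-\tfrac m4$ and $\beta:=b-\tfrac m4$ and using $|\alpha-\beta|+|\alpha+\beta|=2\max(|\alpha|,|\beta|)$. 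Equivalently, then, the task is to balance $a$ against $b$ while simultaneously keeping the cut $c$ within $d$ of $m/2$.

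\noindent\textbf{The local search.} The plan is a local search over partitions $(A,B)$ whose elementary move transfers a single vertex from one side to the other, guided by the lexicographic potential $\Phi(A,B)$ that first minimizes $|a-\tfrac m4|$ and, as a tie-break, minimizes $|c-\tfrac m2|$: we repeatedly perform any single-vertex transfer that strictly decreases $\Phi$, and halt at a local optimum. Since $4|a-\tfrac m4|$ and $2|c-\tfrac m2|$ are nonnegative integers bounded by $O(m)$, at most $O(m^2)$ transfers occur, each identifiable in polynomial time, so the procedure is polynomial. The heart of the proof is the analysis of the local optimum reached. Suppose $|a-\tfrac m4|>\tfrac d2$. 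If $a>\tfrac m4+\tfrac d2$, pick any $v\in A$ incident to an edge of $G[A]$ (one exists, as $G[A]$ has at least one edge); transferring $v$ out of $A$ decreases $a$ by some $x_v\in(0,d]$, and since $x_v\le d<2|a-\tfrac m4|$ this strictly decreases $|a-\tfrac m4|$, contradicting local optimality. If instead $a<\tfrac m4-\tfrac d2$, transferring into $A$ some vertex of $B$ that is joined to $A$ works the same way; and if no vertex of $B$ is joined to $A$ then $c=0$, in which case transferring any vertex of $B$ to $A$ (each has positive degree, as there are no isolated vertices) leaves $a$ fixed while dropping $|c-\tfrac m2|$ from $\tfrac m2$ to $|\deg(v)-\tfrac m2|<\tfrac m2$, a strict improvement of the tie-break. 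Thus at the local optimum $a\in[\tfrac m4-\tfrac d2,\ \tfrac m4+\tfrac d2]$.

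\noindent\textbf{Main obstacle.} The delicate part, and where I expect most of the effort, is to then also control $c$ — and hence, via the identity above, $b$. The difficulty is the coupling between $a$ and $b$: a transfer that pulls $a$ toward $m/4$ may push $b$ by as much as $d$ in the wrong direction, so the two counts cannot simply be fixed in sequence, and the obvious smooth potential $(a-\tfrac m4)^2+(b-\tfrac m4)^2$ only certifies the far weaker deviation $O(\sqrt{md})$, because summing its first-order optimality conditions over all vertices loses a factor $\sqrt{m/d}$. The additive $O(d)$ bound must instead be extracted from the observation that at any local optimum that violates the target bound the crude inequality $\deg(v)\le d$ is smaller than twice the offending deviation, so that every still-available single-vertex transfer is a helpful one; this is exactly the engine behind the argument for $a$ above, and I would push the analogous (more intricate) case analysis of the transfers that leave $|a-\tfrac m4|$ unchanged to force $c\in[\tfrac m2-d,\ \tfrac m2+d]$. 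A secondary issue to handle carefully is the choice of tie-break, so that the search genuinely escapes the degenerate configurations — $G$ disconnected, the cut stuck at $0$, or $d$ comparable to $m$ (in which case $A=V$ already satisfies all three inequalities) — rather than halting at them spuriously.
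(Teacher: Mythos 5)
Your reformulation (the cut bound is implied by the first two bullets), your termination argument, and your proof that a local optimum of your lexicographic potential satisfies $\bigl||E(G[A])|-\frac m4\bigr|\le \frac d2$ are all fine. But the proof stops exactly where the theorem gets hard: the bound for the $B$-side is only announced (``I would push the analogous case analysis''), not proven, and in fact your chosen potential cannot deliver it. At a lexicographic local optimum the only vertices that give you any information are those whose move leaves the primary term $|a-\frac m4|$ unchanged; a vertex whose move worsens the primary term by even $1$ imposes no constraint at all, and such vertices can be precisely the ones needed to repair $b$ and $c$. Concretely, take $d=4$ and let $A$ consist of one $K_5$ together with three ``hub'' vertices, let $B$ consist of three disjoint $K_4$'s, and join each $B$-vertex to a hub so that each hub receives exactly four cut edges. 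Then $m=40$, $a=10=\frac m4$, $b=18$, $c=12$. Every $K_5$-vertex and every $B$-vertex has $a_v\ge 1$, so moving it raises the primary term from $0$ to a positive value; every hub has $a_v=0$ but moving it changes $c$ from $12$ to $8$, i.e.\ further from $\frac m2=20$, so the tie-break worsens. Hence this partition is a local optimum of your $\Phi$, yet $|b-\frac m4|=8>\frac d2=2$ and $|c-\frac m2|=8>d=4$. There is also a quantitative mismatch in your plan even if it could be executed: forcing $|a-\frac m4|\le\frac d2$ and $|c-\frac m2|\le d$ only yields $|b-\frac m4|\le\frac{3d}{2}$, whereas your own identity shows you need the joint bound $|a-b|+|c-\frac m2|\le d$.

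The paper avoids this trap by running the local search on the single \emph{symmetric} measure $\mu=\bigl||E(G[A])|-\frac m4\bigr|+\bigl||E(G[B])|-\frac m4\bigr|$, so that at a local optimum both kinds of moves yield usable inequalities; the conclusion is then extracted by a case analysis on the signs and magnitudes of the two deviations, with the contradiction in the non-trivial cases coming from comparing $\sum_{v\in A}a_v$ with $\sum_{v\in A}b_v$ (i.e.\ $2m_A$ with $m_C$). In the example above, $\mu$ is decreased from $8$ to $6$ by moving any $K_4$-vertex into $A$, so the paper's search correctly refuses to stop there --- the symmetric treatment of the two sides is exactly what your lexicographic order discards.
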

\begin{proof}
The following local search algorithm is used to obtain the desired partition:

At each step, the algorithm maintains a partition $(A,B)$ of $V(G)$. As long as there exists a vertex $v\in A$ (or $v\in B$) such that moving it to other part decreases the measure $\mu=||E(G[A])|-\frac{m}{4}|+||E(G[B])|-\frac{m}{4}|$, the algorithm changes the partition to $(A\setminus v,B\cup v)$ (or $(A\cup v,B\setminus v)$). The algorithm terminates if no vertex can be moved. Since $\mu\leq m$ and in each step, it decreases by at least one, above algorithm terminates in polynomial time.\\
\emph{Correctness}: Let $m_A:=|E(G[A])|,m_B:=|E(G[B])|$, and $m_C:=|E(G[A,B])|$. Let $x:=m_A-\frac{m}{4}$ and $y:=m_B-\frac{m}{4}$ when the algorithm terminates. Then, $\mu=|x|+|y|$. For any vertex $v$, let $a_v$ denote the number of edges incident on $v$ whose other endpoints are in $A$ and $b_v$ denote the number of edges incident on $v$ whose other endpoints are in $B$. Clearly, for every vertex $v$, $a_v+b_v\leq d$. Suppose that a vertex $v\in A$ is moved to $B$. The new partition is $(A',B')=(A\setminus v,B\cup v)$. Then, $m_{A'}=\frac{m}{4}+x-a_v, m_{B'}=\frac{m}{4}+y+b_v$ and the measure at this partition is $\mu'=|x-a_v|+|y+b_v|$. Define $\delta_A^v:=\mu'-\mu=|x-a_v|-|x|+|y+b_v|-|y|$. Similarly, if a vertex $v\in B$ moves to $A$ creating new partition $(A',B')=(A\cup v,B\setminus v)$, we can define $\delta_B^v:=\mu'-\mu=|x+a_v|-|x|+|y-b_v|-|y|$. Note that since the algorithm has terminated, for any vertex $v\in V(G)$, $\delta_A^v\geq0$ and $\delta_B^v\geq0$. Then, the claim of the theorem is that $|x|\leq \frac{d}{2}$ and $|y|\leq \frac{d}{2}$. For the sake of contradiction assume the following possible values of $x$ and $y$: 
\begin{description}
    \item[$x>\frac{d}{2},y>\frac{d}{2}$] : 
    Consider moving a vertex $v\in A$ to $B$. Then, $\delta_A^v=|x-a_v|-|x|+b_v$. Suppose that $x<a_v,\delta_A^v=a_v-x-x+b_v=a_v+b_v-2x$. But, for every vertex $v\in V$, $a_v+b_v\leq d$ which implies $\delta_A^v<0$, a contradiction. 
    
    Hence, for all vertices $v\in A$, $x\geq a_v, \delta_A^v=x-a_v-x+b_v=b_v-a_v$. If $v\in A$ is such that $a_v>b_v$, then $\delta_A^v<0$, a contradiction. Hence, for all vertices $v\in A, a_v\leq b_v$. Then, $\sum\limits\limits_{v\in A} a_v\leq \sum\limits_{v\in A} b_v\implies 2m_A\leq m_C\implies m_C\geq \frac{m}{2}+2x>\frac{m}{2}+d$ which is a contradiction. 
    
    \item[$x<-\frac{d}{2},y <-\frac{d}{2}$] : Consider moving a vertex $v\in A$ to $B$. Then, $\delta_A^v=|y+b_v|-|y|+a_v$. Suppose that $|y|<b_v,\delta_A^v=b_v-|y|-|y|+a_v=a_v+b_v-2|y|$. But, for all vertices $v$, $a_v+b_v\leq d$ which implies that $a_v-2|y|+b_v < 0$, i.e. $\delta_A^v<0$, a contradiction. 
    
    Hence, for every vertex $v\in A,$ we have that $|y|\geq b_v$ and therefore, $\delta_A^v=|y|-b_v-|y|+a_v=a_v-b_v$. If $v\in A$ is such that $a_v<b_v$, then  $\delta_A^v<0$, a contradiction. Hence, for all vertices $v\in A, a_v\geq b_v$. This implies that $\sum\limits_{v\in A} a_v\geq \sum\limits_{v\in A} b_v\implies 2m_A\geq m_C\implies m_C\leq \frac{m}{2}+2x<\frac{m}{2}-d$ which is a contradiction.

    \item[$x>\frac{d}{2}, y<-\frac{d}{2}$] : Consider moving a vertex $v\in A$ to $B$. Then, $\delta_A^v:=\mu'-\mu=|x-a_v|-|x|+|y+b_v|-|y|<0$ as $|x-a_v|-|x|\leq0$ and $|y+b_v|-|y|\leq 0$ and at least one of the inequalities is strict, hence a contradiction.   
    
    \item[$y>\frac{d}{2}, x<-\frac{d}{2}$] : Similar to the previous case.
    \item[$x>\frac{d}{2},|y|\leq \frac{d}{2}$] : 
    Consider moving a vertex $v\in A$ to $B$. Suppose that $x<a_v$, then $\delta_A^v=a_v-2x+|y+b_v|-|y|\leq a_v-2x+b_v<0$, a contradiction. 
    
     Hence, for every vertex $v\in A$, $x\geq a_v$, then $\delta_A^v=|y+b_v|-|y|-a_v\leq b_v-a_v$. If $v\in A$ is such that $a_v>b_v$, then $\delta_A^v<0$, a contradiction. Hence, for every vertex $v\in A$, we have that $a_v\leq b_v$. This implies that $\sum\limits\limits_{v\in A} a_v\leq \sum\limits_{v\in A} b_v\implies 2m_A\leq m_C\implies m_C\geq \frac{m}{2}+2x>\frac{m}{2}+d$ which is a contradiction. 
   \item[$y>\frac{d}{2},|x|\leq \frac{d}{2}$] :  Similar to the previous case.  
  \item[$x<-\frac{d}{2},|y|\leq \frac{d}{2}$] : Consider moving a vertex $v\in B$ to $A$.  If $|x|\geq a_v$, then $\delta_B^v=a_v-2|x|+|y-b_v|-|y|\leq a_v-2|x|+b_v<0$, a contradiction. So, for every vertex $v\in B$, $|x|<a_v$ and $0\leq\delta_B^v=|x|-a_v-|x|+|y-b_v|-|y|\leq -a_v+b_v$. Hence, for each vertex $v\in B$, $a_v\leq b_v$. This implies $\sum\limits\limits_{v\in B} a_v\leq \sum\limits_{v\in B} b_v\implies m_C\leq 2m_B\implies m_C\leq \frac{m}{2}+2y<\frac{m}{2}$ which is a contradiction. 
  \item[$y<-\frac{d}{2},|x|\leq \frac{d}{2}$] :  Similar to the previous case. 
  \end{description}
Hence, $|x|\leq \frac{d}{2}$ and $|y|\leq \frac{d}{2}$. This implies that $\frac{s}{2}- d\leq m_C\leq\frac{s}{2}+ d$. This concludes the proof of the theorem. 
\end{proof}
\section{$d$-\textsc{Feedback Vertex Cover} with Undirected Degree at Most One}\label{dfeed}
Now that we are equipped with Theorem \ref{partition}, we are almost ready to prove Lemma~\ref{dfvcalgo}. First we show a lemma that encapsulates the use of Theorem~\ref{partition} inside the algorithm of Lemma~\ref{dfvcalgo}.
\begin{lemma}\label{fvcpart}
There exists a polynomial time algorithm that given a mixed multigraph $\mathcal{T}\in \mathcal{G}(d,t)$ with disjoint undirected edge set $\mathcal{E}(\mathcal{T})$ outputs a partition $(X,Y)$ of $V(\mathcal{T})$ such that there are no directed edge with one endpoint in $X$ and other in $Y$ and
\begin{itemize}
\item $||\mathcal{E}(X)\cap \mathcal{E}|-\frac{s}{4}|\leq \frac{d}{2}$,
\item $||\mathcal{E}(Y)\cap \mathcal{E}|-\frac{s}{4}|\leq \frac{d}{2}$ and
\item $||\mathcal{E}(X,Y)\cap \mathcal{E}|-\frac{s}{2}|\leq d$
\end{itemize}
where  $s=|\mathcal{E}(\mathcal{T})|$ and $\mathcal{E}(X)$ is the set of undirected edges in $\mathcal{T}[X]$, $\mathcal{E}(Y)$ is the set of undirected edges in $\mathcal{T}[Y]$ and $\mathcal{E}(X,Y)$ is the set of undirected edges with one endpoint in $X$ and other in $Y$.
\end{lemma}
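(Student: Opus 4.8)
The plan is to reduce Lemma~\ref{fvcpart} to Theorem~\ref{partition} by contracting each part $V_i$ to a single vertex. I would define an auxiliary undirected multigraph $G$ whose vertices are the parts $V_1,\dots,V_t$ that are incident to at least one edge of $\mathcal{E}(\mathcal{T})$, and which contains, for every undirected edge $uv\in\mathcal{E}(\mathcal{T})$ with $u\in V_i$ and $v\in V_j$, one edge between the vertices $V_i$ and $V_j$ of $G$, with parallel edges allowed. Since $\mathcal{E}(\mathcal{T})\subseteq\bigcup_{i<j}V_i\times V_j$, every undirected edge of $\mathcal{T}$ has its endpoints in two distinct parts, so in $G$ we never create a self-loop; by construction $G$ has no isolated vertices, and the map from undirected edges of $\mathcal{T}$ to edges of $G$ is a bijection, so $|E(G)|=s$.

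The key quantitative point is the degree bound. A part $V_i$ has $|V_i|\le d$ vertices, and because $\mathcal{E}(\mathcal{T})$ is disjoint (a matching), each vertex of $\mathcal{T}$ lies on at most one undirected edge; hence $\deg_G(V_i)\le|V_i|\le d$. Thus $G$ is an undirected multigraph with no self-loops and no isolated vertices, with $s$ edges and maximum degree at most $d$, so Theorem~\ref{partition} applies (with $m=s$). Let $(A,B)$ be the partition of $V(G)$ it returns, so that $\big||E(G[A])|-\tfrac{s}{4}\big|\le\tfrac{d}{2}$, $\big||E(G[B])|-\tfrac{s}{4}\big|\le\tfrac{d}{2}$, and $\big||E(G[A,B])|-\tfrac{s}{2}\big|\le d$.

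Now I would set $X:=\bigcup_{V_i\in A}V_i$ together with every part of $\mathcal{T}$ not incident to any undirected edge, and $Y:=\bigcup_{V_i\in B}V_i$; then every part of $\mathcal{T}$ lies entirely inside $X$ or entirely inside $Y$. Since every directed edge of $\mathcal{T}$ is contained in a single part, no directed edge has one endpoint in $X$ and the other in $Y$. For the undirected edges, an edge of $\mathcal{E}(\mathcal{T})$ has both endpoints in $X$ exactly when both of its parts lie in $A$ (parts outside $G$ carry no undirected edge), so $\mathcal{E}(X)\cap\mathcal{E}$ is in bijection with $E(G[A])$; likewise $\mathcal{E}(Y)\cap\mathcal{E}$ corresponds to $E(G[B])$ and $\mathcal{E}(X,Y)\cap\mathcal{E}$ to $E(G[A,B])$. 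Substituting into the three bounds above yields exactly the three inequalities claimed. Constructing $G$, invoking the local-search algorithm of Theorem~\ref{partition}, and assembling $X$ and $Y$ are each polynomial, so the whole procedure runs in polynomial time.

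There is essentially no obstacle beyond Theorem~\ref{partition} itself; the two points that need care are (i) using the disjointness of $\mathcal{E}(\mathcal{T})$ precisely where it is needed, namely to conclude $\deg_G(V_i)\le|V_i|\le d$ rather than a larger bound, and (ii) excluding from $G$ the parts untouched by undirected edges, which would otherwise be isolated vertices violating the hypothesis of Theorem~\ref{partition}, and placing them arbitrarily (say all in $X$) at the end. Both are routine.
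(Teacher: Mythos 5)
Your proposal is correct and follows essentially the same route as the paper: contract each part $V_i$ to a single vertex of an auxiliary undirected multigraph, observe that disjointness of $\mathcal{E}(\mathcal{T})$ together with $|V_i|\le d$ bounds the degree by $d$, apply Theorem~\ref{partition}, and lift the partition back to $V(\mathcal{T})$. Your extra care in setting aside parts incident to no undirected edge (so the auxiliary graph has no isolated vertices) is a minor technicality the paper glosses over, and it does not change the argument.
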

\begin{proof}
Construct an undirected, multigraph $Z$ such that $V(Z)=\{z_i|i\in [t]\}$ and $E(Z)=\{z_iz_j|uv\in \mathcal{E}, u\in V_i, v\in V_j\}$. Run the algorithm of Theorem \ref{partition} to get the partition $(A,B)$ of $V(Z)$. Output $X:=\bigcup\limits_{i,z_i\in A}V_i$ and $Y:=\bigcup\limits_{i,z_i\in B}V_i$. 

Since $\mathcal{E}$ is disjoint and for each $i\in [t]$, $|V_i|\leq d$, maximum degree of a vertex in $Z$ is at most $d$. Hence, the correctness of the algorithm and the size bound in the lemma follows from Theorem \ref{partition}. 
\end{proof}
We are now ready to prove Lemma~\ref{dfvcalgo}. For convenience we re-state it here.

\smallskip
\noindent
{\bf Lemma~\ref{dfvcalgo}} {\em There exists an algorithm running in $1.5874^s\cdot 2^{O(d^2+d\log s)}\cdot n^{O(1)}$ time which finds an optimal feedback vertex cover in  a mixed multigraph $\mathcal{T}\in \mathcal{G}(d,t)$ in which the undirected edge set $\mathcal{E}(\mathcal{T})$ is disjoint and $|\mathcal{E}(\mathcal{T})|=s$.} 

\begin{proof}
The algorithm maintains a set $S$ which is initialized to the empty set $\emptyset$. If the underlying undirected graph of $\mathcal{T}$ is disconnected, then the algorithm solves each connected component independently and outputs $S$ as the union of sets returned for each component. If $s\leq d$, then $S$ is an optimal solution set obtained by a brute force search in the instance. If $s> d$, the algorithm obtains a partition $(X,Y)$ of $V(\mathcal{T})$ by running the algorithm of Lemma \ref{fvcpart}. Then, it loops over all subsets $C\subseteq \mathcal{E}(X,Y)$, calling itself recursively on $\mathcal{T}\bigl[V(\mathcal{T})\setminus(V_X(C)\cup V_Y(\mathcal{E}(X,Y)\setminus C))]$ and computes $S_C:=V_X(C)\cup V_Y(\mathcal{E}(X,Y)\setminus C)\cup S'$ where $S'$ is the set returned at the recursive call. 
Finally, the algorithm outputs the smallest set $S_C$ over all choices of $C\subseteq \mathcal{E}(X,Y)$.

Now to argue about the correctness of the algorithm, we use induction on $|\mathcal{E}(\mathcal{T})|$. In the base case $|\mathcal{E}(\mathcal{T})|\leq d$, $S$ is an optimal feedback vertex cover. As the induction hypothesis, suppose that the algorithm outputs an optimal solution for $d<|\mathcal{E}(\mathcal{T})|<s$. Consider $|\mathcal{E}(\mathcal{T})|=s$. Note that for any $C\subseteq \mathcal{E}(X,Y)$, $S_C$ is a $d$-feedback vertex cover as $V_X(C)\cup V_Y(\mathcal{E}(X,Y)\setminus C)$ is a hitting set for $\mathcal{E}(X,Y)$ and by the induction hypothesis, $S'$ is an optimal solution for $\mathcal{T}[V(\mathcal{T})\setminus(V_X(C)\cup V_Y(\mathcal{E}(X,Y)\setminus C))]$. At the same time, for any $C\subseteq \mathcal{E}(X,Y)$, $|V_X(C)\cup V_Y(\mathcal{E}(X,Y)\setminus C)|=|\mathcal{E}(X,Y)|$ which is the size of the smallest hitting set for $\mathcal{E}(X,Y)$. Let $S_o$ be an optimal solution and $C_o:=\mathcal{E}(S_o\cap X,Y\setminus S_o)$. Then, we claim that $|S_{C_o}|=|S_o|$. Clearly, $|S_{C_o}|\geq|S_o|$. Now, $S_o\setminus V_X(C_o)\cup V_Y(\mathcal{E}(X,Y)\setminus C_o)$ is a $d$-feedback vertex cover for $\mathcal{T}[V(\mathcal{T})\setminus (V_X(C_o)\cup V_Y(\mathcal{E}(X,Y)\setminus C_o))]$. Therefore, $|S'|\leq |S_o\setminus (V_X(C_o)\cup V_Y(\mathcal{E}(X,Y)\setminus C_o)|=|S_o|-|\mathcal{E}(X,Y)|\implies |S_{C_o}|\leq|S_o|$, thus proving the claim. 

Now we proceed to the runtime analysis of the algorithm. Let $h(s,d)$ be the maximum number of leaves in the recursion tree of the algorithm when run on an input with parameters $s$ and $d$. Since, in each recursive call, $s$ decreases by at least 1, the depth of the recursion tree is at most $s$. In each internal node of the recursion tree, the algorithm spends polynomial time in size of the input and in each leaf, it spends at most $2^{O(d^2)}$ time as the total number of vertices in each connected component of $\mathcal{T}$ is $O(d^2)$. Thus, the runtime of the algorithm on any input with parameters $s$ and $d$ is upper bounded by $h(s,d)\times 2^{O(d^2)}\times n^{O(1)}$. To upper bound $h(s,d)$, first note that $h(a,d)+h(b,d)\leq h(a+b,d)$ because $h(a,d)$ and $h(b,d)$ represent the number of leaves of two independent subtrees. Now for each $C\subseteq \mathcal{E}(X,Y)$, in $\mathcal{T}[V(\mathcal{T})\setminus (V_X(C)\cup V_Y(\mathcal{E}(X,Y)\setminus C))]$, the undirected edge set $\mathcal{E}(X,Y)=\emptyset$. Hence, the algorithm effectively solves $\mathcal{T}[V(\mathcal{T})\setminus (V_X(C)]$ and $\mathcal{T}[V(\mathcal{T})\setminus V_Y(\mathcal{E}(X,Y)\setminus C)]$ independently where by Lemma \ref{fvcpart}, the number of undirected edges is at most $\frac{s}{4}+\frac{d}{2}$ for each instance. Again by Lemma \ref{fvcpart}, $|\mathcal{E}(X,Y)|\leq \frac{s}{2}+d$. Hence, the number of choices for $C\subseteq \mathcal{E}(X,Y)$ is at most $2^{\frac{s}{2}+d}$. As we have seen for each $C$, the algorithm calls itself twice on graphs with the undirected edge set size at most $\frac{s}{4}+\frac{d}{2}$. So in total, the algorithm makes $2^{\frac{s}{2}+d+1}$ recursive calls with \emph{parameter} $\frac{s}{4}+\frac{d}{2}$. Thus $h(s,d)$ is upper bounded by the recurrence relation $h(s,d)\leq 2^{1+\frac{s}{2}+d} h(\frac{s}{4}+\frac{d}{2},d)$
%
%
%
%
which solves to $h(s,d)=1.5874^s\cdot 2^{O(d\log s)}$. Hence, the runtime of the algorithm is bounded by $1.5874^s\cdot 2^{O(d\log s)}\times 2^{O(d^2)}\times n^{O(1)}=1.5874^s\cdot 2^{O(d^2+d\log s)}\cdot n^{O(1)}$.
\end{proof}

The proof of Lemma~\ref{dfvcalgo} completes the proof of our main result, an algorithm for \textsf{TFVS} with running time upper bounded by $O(1.466^n)$ and by $O(1.618^k + n^{O(1)})$.

\appendix


\nocite{DBLP:journals/ipl/GutinJ14}
\bibliography{fvst}{}



\end{document}